\newtheorem{definition}{Definition}
\newtheorem{theorem}{Theorem}
\begin{document}

\CopyrightYear{2016}
\setcopyright{acmcopyright}
\conferenceinfo{SIGIR '16,}{July 17--21, 2016, Pisa, Italy}
\isbn{978-1-4503-4069-4/16/07}
\acmPrice{\$15.00}
\doi{http://dx.doi.org/10.1145/2911451.2911493}

\clubpenalty=10000 
\widowpenalty = 10000

\title{Composite Correlation Quantization for Efficient Multimodal Retrieval\titlenote{Corresponding authors: Jianmin Wang and Mingsheng Long.}}

\numberofauthors{1}
\author{
\alignauthor
Mingsheng Long$^{\dag}$, Yue Cao$^{\dag}$, Jianmin Wang$^{\dag}$, and Philip S. Yu$^{\ddag\sharp}$\\
\affaddr{$^\dag$School of Software, Tsinghua National Laboratory (TNList), Tsinghua University, Beijing, China}\\
\affaddr{$^\ddag$Institute for Data Science, Tsinghua University \quad $^\sharp$University of Illinois at Chicago, IL, USA}\\
\email{\{mingsheng, jimwang\}@tsinghua.edu.cn, caoyue10@gmail.com, psyu@uic.edu}\\
}

\maketitle

\begin{abstract}
Efficient similarity retrieval from large-scale multimodal database is pervasive in modern search engines and social networks. To support queries across content modalities, the system should enable cross-modal correlation and computation-efficient indexing. While hashing methods have shown great potential in achieving this goal, current attempts generally fail to learn isomorphic hash codes in a seamless scheme, that is, they embed multiple modalities in a continuous isomorphic space and separately threshold embeddings into binary codes, which incurs substantial loss of retrieval accuracy. In this paper, we approach seamless multimodal hashing by proposing a novel Composite Correlation Quantization (CCQ) model. Specifically, CCQ jointly finds correlation-maximal mappings that transform different modalities into isomorphic latent space, and learns composite quantizers that convert the isomorphic latent features into compact binary codes. An optimization framework is devised to preserve both intra-modal similarity and inter-modal correlation through minimizing both reconstruction and quantization errors, which can be trained from both paired and partially paired data in linear time. A comprehensive set of experiments clearly show the superior effectiveness and efficiency of CCQ against the state of the art hashing methods for both unimodal and cross-modal retrieval.
\end{abstract}

%\category{H.3.1}{Information Storage and Retrieval}{Content Analysis and Indexing}[Indexing methods]
%\category{H.3.3}{Information Storage and Retrieval}{Information Search and Retrieval}[Retrieval models]
%\terms{Algorithms, Experimentation}
\keywords{Hashing, quantization, multimodal retrieval, correlation analysis}

\section{Introduction}\label{section:Introduction}
While big data with large volume, high dimensions, and multiple modalities are ubiquitous in search engines and social networks, it has attracted increasing attention to distill the correlation structures across heterogenous data modalities. For example, an uploaded image on Flickr is usually annotated with some relevant descriptions or tags, while a featured article on Wikipedia may consist of some correlative images. As relevant data from different modalities may endow semantic correlations, it is desirable to support \emph{multimodal search}, which retrieves semantically-relevant results of all modals in response to a unimodal query. Taking Flickr as an example, when a query image is given, the system should return both relevant tags and images. Due to large volume and semantic gap \cite{cite:TPAMI14Wiki}, effective and efficient retrieval of multimodal data remains a challenge.

In the case that the reference database is large-scale or that the distance calculation between query item and database item is costly, an efficient solution to enabling similarity search is hashing based methods \cite{cite:Arxiv14HashSurvey}, which perform approximate nearest neighbor (ANN) search with both computation efficiency and acceptable accuracy. The principle of hashing is to transform high-dimensional data into compact binary codes and generate similar binary codes for similar data items. The seminal work includes Locality Sensitive Hashing (LSH) \cite{cite:FOCS06LSH} and Spectral Hashing (SH) \cite{cite:NIPS09SH}. However, traditional \emph{unimodal} hashing methods cannot support multimodal search as ANN cannot be directly computed across different modalities.

Recently, several useful attempts have been made to \emph{multimodal hashing}, which builds correlation structures across multiple modalities in the process of hash function learning and index multimodal data in a common Hamming space \cite{cite:CVPR10CMSSH,cite:SIGIR11CHMIS,cite:IJCAI11CVH,cite:NIPS12CRH,cite:KDD12MLBE,cite:MM13LCMH,cite:SIGMOD13IMH,cite:VLDB14MSAE,cite:SIGIR14DCDH,cite:AAAI14SCM,cite:MM14CorrAE,cite:IJCAI15QCH,cite:ICCV15MCNN}. These methods generally work in two-step pipeline: first, embed multiple data modalities into a \emph{continuous} isomorphic latent space by maximizing inter-modal correlations, and second, quantize the isomorphic embeddings into binary hash codes by sign thresholding. While showing promising performance, the two-step pipeline may encounter two limitations: first, conversion from real-valued features to discrete codes may incur substantial information loss, making the continuous latent space suboptimal for binary coding and the binary codes suboptimal for retrieval \cite{cite:VLDB14MSAE,cite:MM14IMVH}; second, directly binarizing latent features may lead to unbalanced encoding schemes \cite{cite:NIPS12CRH,cite:KDD12MLBE}. Fundamentally, by continuous relaxation of the binary constraints, most methods solve an optimization problem which may deviate significantly from the hashing objective as the quantization error is not accounted for in the optimization process. This somewhat contradicts the motivation of multimodal hashing. Hence, how to learn isomorphic hash codes for multimodal data in a seamless optimization framework remains an open problem.

\begin{figure*}[tbp]
    \centering
    \includegraphics[width=1.0\textwidth]{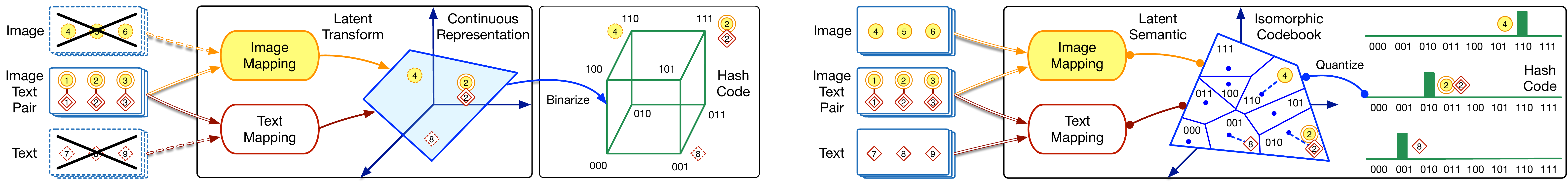}
    \caption{Flowcharts of prior work (left) and CCQ (right). Prior work is a two-step pipeline: first map image-text pairs to isomorphic latent space (denoted as polygon) and then binarize the continuous representation to hash codes (denoted as vertices of hypercube) by \emph{sign thresholding}. CCQ is a seamless optimization framework: jointly map both paired/unpaired images and texts to isomorphic latent space (denoted as polygon) and learn hash codes by \emph{composite quantization}. The quantization model learns isomorphic codebook (denoted as Voronoi digram) and binary codes (denoted as histograms) by minimizing the quantization error, which suffices to assign each latent representation to $M$-nearest codewords (denoted as Voronoi cells) and assignment indices are used as hash codes.}
    \label{fig:CCQi}
    \vspace{-5pt}
\end{figure*}

In this paper, we propose Composite Correlation Quantization (CCQ), a novel model towards seamless multimodal hashing. Technically, CCQ jointly finds correlation-maximal mappings that transform different modalities into an isomorphic latent space, and learns composite quantizers that convert the isomorphic latent features into compact binary codes. The flowcharts of CCQ and prior work are shown in Figure~\ref{fig:CCQi}. To create a seamless optimization framework, we are inspired by Latent Semantic Analysis (LSA) \cite{cite:IS90LSA} and decompose each datum into three latent factors, namely, correlation-maximal mapping, similarity-preserving codebook, and compact binary code. The three latent factors are jointly learned through an optimization problem, which preserves both intra-modal similarity and inter-modal correlation while minimizing both reconstruction and quantization errors. The CCQ model can construct extremely compressed and balanced binary codes to enable efficient multimodal search, can readily handle a ubiquitous semi-paired scenario where only a fraction of input data are multimodal, and can scale linearly to large sample size. Comprehensive empirical evidence on large-scale datasets confirms that the CCQ model exhibits superior performance in both effectiveness and efficiency on both unimodal and cross-modal search against state of the art hashing methods.

The subsequent paper is organized as follows. We review related works in Section~\ref{section:RelatedWork}. We formally present our model in Section~\ref{section:CCQ} and algorithm with analysis in Section~\ref{section:Optimization}. Empirical evaluations are reported in Section~\ref{section:Experiments}, while conclusions are enclosed in Section~\ref{section:Conclusion}.

\section{Related Work}\label{section:RelatedWork}
Recently, hashing-based multimodal search is a prevalent research focus in machine learning and information retrieval communities \cite{cite:CVPR10CMSSH,cite:IJCAI11CVH,cite:MM13LCMH,cite:SIGMOD13IMH,cite:SIGIR14DCDH,cite:MM14CorrAE,cite:MM14IMVH,cite:AAAI14SCM,cite:IJCAI15QCH,cite:IJCAI15PMMH,cite:CVPRDSRH}, which enables approximate similarity search on multimedia database with significant speedup and acceptable accuracy. Refer to \cite{cite:Arxiv14HashSurvey} for a comprehensive survey.

Existing multimodal hashing methods can be organized into two categories: supervised methods and unsupervised methods. CMSSH \cite{cite:CVPR10CMSSH}, SCM \cite{cite:AAAI14SCM}, QCH \cite{cite:IJCAI15QCH}, and SePH \cite{cite:CVPR15SPH} are supervised hashing methods that require labeled pairs to indicate if the objects from different modalities are similar (positive) or dissimilar (negative). As supervised information is usually unavailable in many applications, the deployment of these methods may be severely restricted. CVH \cite{cite:IJCAI11CVH}, IMH \cite{cite:SIGMOD13IMH}, MSAE \cite{cite:VLDB14MSAE} and CorrAE \cite{cite:MM14CorrAE} are unsupervised hashing methods applicable to the most general multimodal retrieval case given that paired data are available, while our proposed CCQ model falls into this category. IMH \cite{cite:SIGMOD13IMH} is an extension of spectral hashing \cite{cite:NIPS09SH} to multimodal data, which is restricted by the training burden since constructing and eigendecomposing the similarity matrices require $O(N^2)$. While CVH \cite{cite:IJCAI11CVH} tackles the scalability issue, it does not jointly maximize cross-modality correlation and preserve intra-modality similarity. MSAE \cite{cite:VLDB14MSAE} and CorrAE \cite{cite:MM14CorrAE} can capture both intra-modal similarity and inter-modal correlation by deep autoencoders, but they require spectral hashing or sign thresholding for obtaining binary codes from the continuous embeddings, which will give rise to uncontrollable quantization errors \cite{cite:CVPR11ITQ,cite:TPAMI11PQ}.

A crucial problem with existing methods is that they essentially work in a separated two-step pipeline: first embed multimodal data into a common \emph{continuous} latent space and then threshold the continuous embeddings into binary codes of the Hamming space. Such conversion from real-valued features to discrete codes may result in substantial information loss, making the continuous latent space suboptimal for the binary codes and the binary codes suboptimal for retrieval \cite{cite:ICML14CQ}. Furthermore, directly binarizing latent representation may lead to unbalanced encoding schemes, as shown in \cite{cite:NIPS12CRH,cite:KDD12MLBE}. Although IMVH \cite{cite:MM14IMVH} learns multimodal hash functions using a graph-cut quantizer instead of the sign thresholding, the quantizer solves a fast approximation of energy function with orthogonal constraints and recurs large quantization error and unbalanced codes. CCQ approaches this problem by learning the modality-consistent latent space and balanced binary codes in a principled framework.

\section{Composite Correlation Quantization}\label{section:CCQ}

\subsection{Problem Statements}
In the multimodal search system, the database and query consist of objects from different modalities. We only use image and text as two modalities to explain our approach, but the approach is formulated to support any number $V$ of modalities. Let ${\bf X}^1 \in \mathbb{R}^{P_1 \times N_1}$ be an image set of $N_0$ images with tags and the rest ${\bar N_1}$ images without tags, where $N_1 = N_0 + {\bar N}_1$ and each image is represented by $P_1$-dimensional feature vector. Let ${\bf X}^2 \in \mathbb{R}^{P_2 \times N_2}$ be a text set of $N_0$ documents of the image tags and additional ${\bar N}_2$ documents, where $N_2 = N_0 + {\bar N}_2$ and each text is represented by $P_2$-dimensional feature vector. Note that the proposed approach can handle \emph{semi-paired} data where only a fraction $N_0/({N}_1 + {N}_2)$ of objects are multimodal, and is more realistic than typical multimodal methods.

An efficient approach to calculating the distance between image and text is to map images and texts to modality-isomorphic binary codes in which different modalities of the objects are comparable. In this paper, we will approach this problem by a joint optimization framework, dubbed Composite Correlation Quantization (CCQ).

\begin{definition}[CCQ]
	Given an image ${\bf x}^1_n \in \mathbb{R}^{P_1}$ and a text ${\bf x}^2_n \in \mathbb{R}^{P_2}$, learn two correlation-maximal mappings ${f}^1 : \mathbb{R}^{P_1} \mapsto \mathbb{R}^{D}$ and ${f}^2 : \mathbb{R}^{P_2} \mapsto \mathbb{R}^{D}$ that transform images and texts into a $D$-dimensional isomorphic latent space, and jointly learn two composite quantizers ${q}^1 : \mathbb{R}^{D} \mapsto \{0, 1\}^{H}$ and ${q}^2 : \mathbb{R}^{D} \mapsto \{0, 1\}^{H}$ that quantize latent embeddings into compact $H$-bits binary codes.
\end{definition}

In the common $H$-bits binary space, image and text can be easily comparable such that both intra-modal and cross-modal search can be readily supported. After mappings $f^1$, $f^2$ and quantizers $q^1$, $q^2$ have been learned, the multimodal search problem can be converted into classical approximate nearest neighbor (ANN) search problem.

\subsection{Composite Correlation Quantization}
The main idea of CCQ is to jointly learn a correlation-maximal latent space and a similarity-preserving composite quantization in a unified optimization framework. To achieve this mission, we are inspired by Latent Semantic Analysis (LSA) \cite{cite:IS90LSA} and decompose each input datum (image or text) ${\bf x}^v_n$ into three latent factors ${\bf R}^v$, ${\bf C}^v$, ${\bf b}^v_n$, that is, ${\bf{x}}_n^v \approx {{\bf{R}}^v}{{\bf{C}}^v}{\bf{b}}_n^v$. While sharing similar formation as LSA, our formulation endows these latent factors with different semantics and thus constrains them with different conditions. More specifically, ${\bf R}^v$ is correlation-maximal mapping, ${\bf C}^v$ is similarity-preserving codebook, and ${\bf b}^v_n$ is the compact binary code of ${\bf x}^v_n$. We present how to formulate the CCQ approach under these semantics.

\subsubsection{Intra-Modality Similarity Quantization}
To represent inputs with compact binary codes, two mainstream paradigms are sign thresholding in Hamming embedding methods \cite{cite:NIPS09SH}, and vector quantization in codebook-based encoding methods \cite{cite:TPAMI11PQ}. As sign thresholding cannot guarantee minimal quantization error, we therefore adopt the vector quantization paradigm. CCQ is based on a set of $M$ codebooks ${\bf C}^v = [{\bf C}^v_1,\ldots,{\bf C}^v_M]$, where each codebook ${\bf C}^v_m$ contains $K$ codewords ${\bf C}^v_m = [{\bf C}^v_{m1},\ldots,{\bf C}^v_{mK}]$, and each codeword ${\bf C}^v_{mk}$ is a $D$-dimensional vector like the cluster centroid in kmeans clustering. Corresponding to the $M$ codebooks, we partition the binary codewords assignment vector ${\bf b}^v_n$ into $M$ $1$-of-$K$ indicator vectors ${\bf b}^v_n = [{\bf b}^v_{1n}; \ldots; {\bf b}^v_{mn}]$, and each indicator vector ${\bf b}^v_{mn}$ indicates which one (and only one) of the $K$ codewords in the $m$th codebook is selected to approximate the $n$th data point. The CCQ model encodes each ${\bf x}^v_n$ as the sum of $M$ codewords, one codeword per codebook, each indicated by the binary assignment vector ${\bf b}^v_n$. This yields a novel and more accurate composite approximation scheme ${\bf{x}}_n^v \approx {{\bf{R}}^v}\sum\nolimits_{m = 1}^M {{\bf{C}}_m^v{\bf{b}}_{mn}^v} $. Consistent with LSA and kmeans, the sum of squared loss between all ${\bf x}^v_n$'s and the sum of selected codewords after transformed by ${\bf R}^v$, is minimized,
\begin{equation}\label{eqn:CCQintra}
  \begin{aligned}
		\mathop {\min }\limits_{{{\bf{R}}^v},{{\bf{C}}^v},{{\bf{B}}^v}} & \sum\limits_{n = 1}^{{N_v}} {\left\| {{\bf{x}}_n^v - {{\bf{R}}^v}\sum\limits_{m = 1}^M {{\bf{C}}_m^v{\bf{b}}_{mn}^v} } \right\|_2^2} \\
    {\rm{s.t.}} \quad &{\left\| {{{\bf{b}}_{{m}n}}} \right\|_0} = 1,{{\bf{b}}_{{m}n}} \in {\left\{ {0,1} \right\}^K} \\
    &m = 1 \ldots M,n = 1 \ldots N_v, \\ 
  \end{aligned}
\end{equation}
where ${\left\|  \cdot  \right\|_0}$ denotes the $\ell_0$-norm that simply counts the number of the vector's nonzero elements. The constraint guarantees that only one codeword in each codebook can be activated to approximate the input data, hence it can lead to compact binary codes. As the binary constraints are directly imposed to the learning objective and are valid throughout the optimization procedure, the derived binary codes are much more accurate than sign thresholding binary codes. The rationale of using $M$ codebooks instead of single codebook to approximate each input datum is to further minimize quantization error, as the latter is shown to yield significantly lossy compression and incur evident performance drop \cite{cite:ICML14CQ,cite:CVPR14AQ}. Quantization based on multiple codebooks yields balanced composite binary codes which are more effective than Hamming embedding binary codes \cite{cite:TPAMI11PQ,cite:CVPR13CK}.

\subsubsection{Inter-Modality Correlation Maximization}
The most desirable value of multimodal retrieval is to enable transfer of knowledge across different modalities so that cross-modal retrieval performance can be improved. A fundamental assumption for multimodal retrieval is that by mapping objects in a modality-consistent latent space, the latent space representations of semantically relevant inter-modal pairs should be consistent. More specifically, for each input object with both image modality ${\bf x}^1_n$ and text modality ${\bf x}^2_n$, after being transformed by ${\bf R}^1$ and ${\bf R}^2$ in Equation~\eqref{eqn:CCQintra}, the latent space representations for image modality ${\bf C}^1 {\bf b}^1_n$ and text modality ${\bf C}^2 {\bf b}^2_n$ should be similar. To our knowledge, most prior work adopts the coupling strategy to minimize $\left\| {{{\mathbf{C}}^1}{\mathbf{b}}_n^1 - {{\mathbf{C}}^2}{\mathbf{b}}_n^2} \right\|_2^2$. In this paper, we propose to maximize cross-modal correlation by sharing codebooks $\{{\bf C}_m\}_{m=1}^M$ for different modalities and sharing binary codes $\{{\bf b}_n\}_{n=1}^{N_0}$ for semantically relevant inter-modal pairs. While for the data points with only one modality, the multimodal sharing strategy does not apply. Hence, the proposed condition that the modality-consistent latent space should satisfy is formulated as
\begin{equation}\label{eqn:CCQinter}
	{\bf{C}}_m^v = {{\bf{C}}_m} \\
	\;\;{\rm and} \;\;
  \delta \left( {{\bf{b}}_{{m}n}^{ v }} \right) = 
    \begin{cases}
      {{\bf{b}}_{{m}n}},&{n} = 1 \ldots {N_0} \\
      {\bf{b}}_{{m}n}^{ v },&{\rm{otherwise}}, \\ 
    \end{cases} \\
\end{equation}
where $\delta(\cdot)$ distinguishes multimodal objects from unimodal ones. Different from most prior methods \cite{cite:SIGMOD13IMH,cite:MM14CorrAE}, our modality-consistent condition requires identical code ${\bf{b}}_n^1 = {\bf{b}}_n^2$, instead of minimized distance $\left\| {{\bf{b}}_{n}^1 - {\bf{b}}_{n}^2} \right\|$, for the semantically relevant inter-modal pairs. There are two advantages of our approach. First, since our learning objective keeps the binary constraint valid throughout optimization procedure, it is very difficult to require minimized distance between two binary codes as their nonzero elements may differ significantly. Note that prior methods simply drop the binary condition and solve a continuous problem, which leads to uncontrollable quantization error with the post-step sign thresholding. Second, integrating the minimized distance condition in the learning objective as existing methods may introduce a trade-off term, or parameter, that is hard to tune since its magnitude is very different from  learning loss \eqref{eqn:CCQintra}.

\subsubsection{Joint Optimization Framework}
To approach CCQ, which jointly learns a similarity-preserving composite quantization and a correlation-maximal latent space in a unified optimization framework, we jointly require the codebooks $\{{\bf C}_m\}_{m=1}^M$ subject to minimizing the quantization error of all modalities as Equation~\eqref{eqn:CCQintra}, and the mappings ${\bf R}^v$ subject to maximizing the correlations between semantically relevant inter-modal pairs as Equation~\eqref{eqn:CCQinter} while jointly minimizing the reconstruction error of input data as LSA. This leads to a joint optimization framework as
\begin{equation}\label{eqn:CCQ}
  \begin{aligned}
    \mathop {\min }\limits_{{{\bf{R}}^v},{\bf{C}},{{\bf{B}}^v}} & \sum\limits_{v = 1}^V {\sum\limits_{n = 1}^{{N_v}} {{\lambda _v}\left\| {{\bf{x}}_n^v - {{\bf{R}}^v}\sum\limits_{m = 1}^M {{{\bf{C}}_m}\delta \left( {{\bf{b}}_{mn}^v} \right)} } \right\|_2^2} } \\
    {\rm{s.t.}} \quad & {{\bf{R}}^{v\mathsf{T}}}{{\bf{R}}^v} = {\bf{I}}_{D \times D}, {{\bf{R}}^{ v }} \in {\mathbb{R}^{{P_v} \times {D}}} \\
    & {\left\| {\delta \left( {{\bf{b}}_{mn}^v} \right)} \right\|_0} = 1,\delta\left( {\bf{b}}_{mn}^v \right) \in {\left\{ {0,1} \right\}^K} \\
    & \delta \left( {{\bf{b}}_{mn}^{ v }} \right) = 
      \begin{cases}
        {{\bf{b}}_{mn}},&{n} = 1 \ldots {N_0} \\
        {\bf{b}}_{mn}^{ v },&{\rm{otherwise}} \\ 
      \end{cases} \\
    & v = 1 \ldots V,m = 1 \ldots M,n = 1 \ldots N_v, \\ 
  \end{aligned}
\end{equation}
where $\lambda_v$ is the weight parameter for each modality, and in bimodal problems with $V=2$, we can simplify the notations by denoting $\lambda_1 = 1$ and $\lambda_2 = \lambda$, while such notations are used throughout this paper. ${\bf R}^v$ is the transformation matrix that maps the inputs of each modality to a $D$-dimensional modality-consistent latent space. The orthogonal constraints are motivated by LSA, which can turn latent factors ${\bf R}^v$ into transformation matrices for efficient out-of-sample quantization. The binary codes ${\bf b}^v_n$ are $M \times K$-dimensional, fortunately however, each ${\bf b}^v_{mn}$ is $1$-of-$K$ encoding with only one nonzero element and can be represented using ${\log _2}K$ bits, hence the final hash codes ${\bf b}^v_n$ can be compacted into $H = M{\log_2}K$ bits, which is independent on the dimensions of input or latent spaces. To fit each ${\bf b}^v_{mn}$ into one byte, $K = 256$ is a good choice \cite{cite:TPAMI11PQ,cite:ICML14CQ}. We simply set $D=\min(\{P_v\}_{v=1}^V,H)$, in the sense that an $H$-bit binary code can reconstruct a $D$-dimensional vector accurately.

Formally, we derive correlation-maximal mappings ${f^v}\left( {{\bf{x}}_n^v} \right) = {{\bf{R}}^{v\mathsf{T}}}{\bf{x}}_n^v$ and similarity-preserving quantizers ${q^v}\left( {{f^v}\left( {{\bf{x}}_n^v} \right)} \right) = {\bf{b}}_n^v$. There are several advantages of the CCQ approach. First, CCQ jointly learns a correlation-maximal latent space and a similarity-preserving composite encoding, which can minimize the quantization loss and guarantee search quality. Second, CCQ explores both paired and unpaired data in a semi-paired quantization paradigm, which can benefit from semi-supervised learning in that paired data consolidate inter-modality correlation and unpaired data enhance intra-modality quantization. Third, CCQ is formulated with only two easy-tuning model parameters $D$ and $\lambda$, where $D$ can be set as simply as LSA to retain most covariance information, and $\lambda$ can be selected by trading off different modalities using prior information. In particular, the proposed sharing of codebooks and binary codes across modalities \eqref{eqn:CCQinter} enables joint learning of latent semantics that are maximally correlated in the isomorphic feature space, which contributes most significantly to the efficacy of the CCQ approach.

\subsection{Approximate Nearest Neighbor Search}
Approximate nearest neighbor (ANN) search based on Euclidean distance is a powerful task for quantization techniques \cite{cite:TPAMI11PQ}. Given a database of CCQ hash codes $\{{\bf b}^{v}_n\}_{n=1}^{N_{v}}$, we follow \cite{cite:TPAMI11PQ,cite:CVPR13CK} and use \emph{Asymmetric Quantizer Distance} (AQD) as similarity metric that computes the distance between query ${\bf q}^{\bar v}$ and database point ${\bf x}^{v}_{n}$ as
\begin{equation}\label{eqn:AQD}
	\begin{aligned}
		{\text{AQD}} &\left( {{{\bf{q}}^{\bar v}},{{\bf{x}}^{v}_n}} \right) = \left\| {{{\bf{q}}^{\bar v}} - {{\bf{R}}^{\bar v}}\sum\nolimits_{m = 1}^M {{{\bf{C}}_m}{\bf{b}}_{mn}^{v}} } \right\|_2^2 \\
  	&= \textstyle{- 2\sum\nolimits_{m = 1}^M {\left\langle {{{\tilde{{\bf q}}}^{\bar v}},{{\bf{C}}_m}{\bf{b}}_{mn}^{v}} \right\rangle } + \left\| {\sum\nolimits_{m = 1}^M {{{\bf{C}}_m}{\bf{b}}_{mn}^{v}} } \right\|_2^2} \\
		&+ \left\| {{{\tilde{{\bf q}}}^{\bar v}}} \right\|_2^2 + \left\| {{{{{{\bf{R}}_{\bot}^{{\bar v}{\sf T}}}}}}{{\bf{{q}}}^{\bar v}}} \right\|_2^2, \\
	\end{aligned}
\end{equation}
where ${{\tilde{{\bf q}}}^{\bar v}} = {{\bf{R}}^{{\bar v}\mathsf{T}}}{{\bf{q}}^{\bar v}}$ is the transformed query. In the second row, the first term computes the inner products between ${\tilde {\bf q}}^{\bar v}$ and $M$ codewords selected by ${\bf b}^{v}_n$. Given a query, these inner products for all $M$ codebooks $\{{\bf C}_m\}_{m=1}^M$ and all $K$ possible values of ${\bf b}^{v}_{mn}$ can be pre-computed and stored in a query-specific $M \times K$ lookup table, which is used to compute AQD between the query and all database points, each entails $M$ table lookups and additions and is slightly more costly than Hamming distance. The second term computes the squared norm of decoded database point, which is independent on the query and can be encoded using one byte by quantizing these scale values on held-out dataset \cite{cite:CVPR14AQ}. At quantization, we augment CCQ code with the norm byte, which costs one more lookup and one more byte per database point. We can eliminate this norm byte by composite quantization \cite{cite:ICML14CQ}, but will leave it to our future work.

\section{Algorithm and Analysis}\label{section:Optimization}

\subsection{Learning Algorithm}
The CCQ optimization problem~\eqref{eqn:CCQ} consists of three variables, ${\bf R}^v$, ${\bf C}$, and ${\bf B}^v$. We adopt alternating optimization  \cite{cite:TPAMI11PQ,cite:CVPR13CK,cite:CVPR14AQ,cite:ICML14CQ} which iteratively updates one variable with the rest variables fixed.

\subsubsection{Update ${\bf R}^v$}
We update ${\bf R}^v$ by fixing ${\bf C}$ and ${\bf B}^v$ as known variables, and write Equation~\eqref{eqn:CCQ} with ${\bf R}^v$ as unknown variables in matrix formulation,
\begin{equation}\label{eqn:updateR}
	\begin{aligned}
  	\mathop {\min }\nolimits_{{{\bf{R}}^v}} &\left\| {{{\bf{X}}^v} - {{\bf{R}}^v}{\bf{C}}\delta \left( {{{\bf{B}}^v}} \right)} \right\|_F^2 \\
  	{\rm{s.t.}}\quad &{{\bf{R}}^{v{\mathsf{T}}}}{{\bf{R}}^v} = {{\bf{I}}_{D \times D}}. \\ 
	\end{aligned}
\end{equation}
This is equivalent to the Orthogonal Procrustes problem \cite{cite:P66OP} and can be solved exactly using SVD. More specifically, we perform SVD as ${{\bf{X}}^v}{\left[ {{\bf{C}}\delta \left( {{{\bf{B}}^v}} \right)} \right]}^{\mathsf{T}} = {\bf{US}}{{\bf{V}}^{\mathsf{T}}}$, then we achieve ${{\bf{R}}^v} = {\bf{U}}{{\bf{V}}^{\mathsf{T}}}$.

\subsubsection{Update ${\bf C}$}
We update ${\bf C}$ by fixing ${\bf R}^v$ and ${\bf B}^v$ as known variables, and write Equation~\eqref{eqn:CCQ} with ${\bf C}$ as unknown variables in matrix formulation,
\begin{equation}\label{eqn:updateC}
	\mathop {\min }\limits_{\bf{C}} \sum\nolimits_{v = 1}^V {\left\| {{{\bf{R}}^{v{\mathsf{T}}}}{{\bf{X}}^v} - {\bf{C}}\delta \left( {{{\bf{B}}^v}} \right)} \right\|_F^2}.
\end{equation}
This is an unconstrained quadratic problem with analytic solution ${\bf{C}} = \left[ {\sum\nolimits_{v = 1}^V {{\lambda _v}{{\bf{R}}^{v{\mathsf{T}}}}{{\bf{X}}^v}\delta \left( {{{\bf{B}}^v}} \right)^{\mathsf{T}}} } \right]{\left[ {\sum\nolimits_{v = 1}^V {{\lambda _v}\delta \left( {{{\bf{B}}^v}} \right) \delta {{\left( {{{\bf{B}}^v}} \right)}^{\mathsf{T}}}} } \right]^{ - 1}}$. Algorithms such as L-BFGS can be used to speed up computation.

\subsubsection{Update ${\bf B}^v$}
It is obvious that each ${\bf b}^v_n$ is independent on $\{{\bf b}^v_{n'}\}_{{n'} \ne n}$, then the optimization problem for ${\bf B}^v$ is decomposed to $N_v$ subproblems,
\begin{equation}\label{eqn:updateB}
  \begin{aligned}
    \mathop {\min }\limits_{{\bf{b}}_n^v} & \sum\nolimits_{v = 1}^V {{\lambda _v}\left\| {{{\bf{R}}^{v{\mathsf{T}}}}{\bf{x}}_n^v - \sum\nolimits_{m = 1}^M {{{\bf{C}}_m}\delta \left( {{\bf{b}}_{mn}^v} \right)} } \right\|_2^2}  \hfill \\
    & {\rm{s.t.}} \quad {\left\| {\delta \left( {{\bf{b}}_{mn}^v} \right)} \right\|_0} = 1,\delta\left( {\bf{b}}_{mn}^v \right) \in {\left\{ {0,1} \right\}^K}. \\
  \end{aligned}
\end{equation}
This optimization problem is generally NP-hard. As shown in \cite{cite:ICML14CQ}, this problem is essentially high-order Markov Random Field (MRF) problem and can be solved by the Iterated Conditional Modes (ICM) algorithm \cite{cite:JRSS86ICM} which solves $M$ indicators $\{{\bf b}^v_{mn}\}_{m=1}^M$ alternatively. Given $\{{\bf b}^v_{{m'}n}\}_{{m'} \ne m}$ fixed, we update ${\bf b}^v_{mn}$ by exhaustively checking all the codeword in codebook ${\bf C}_m$, finding the codeword such that the objective in \eqref{eqn:updateB} is minimized, and setting the corresponding entry of ${\bf b}^v_{mn}$ as $1$ and the rest as $0$. The algorithm is guaranteed to converge, and can be terminated if maximum iterations are reached. To accelerate quantization, we can explore hierarchical structure of codebooks $\{{\bf C}_m\}$ and update $\{{\bf b}^v_{mn}\}$ by a new greedy algorithm. Specifically, after updating $\{{\bf b}^v_{{m'}n}\}_{{m'}<{m}}$, we can update ${\bf b}^v_{mn}$ by encoding residual ${{\bf{R}}^{v{\mathsf{T}}}}{\bf{x}}_n^v - \sum\nolimits_{m' = 1}^{m - 1} {{{\bf{C}}_{m'}}\delta \left( {{\bf{b}}_{m'n}^v} \right)} $ with codebook ${\bf C}_m$.
The overall learning procedure is summarized in Algorithm~\ref{alg:CCQ}.

\begin{algorithm}[tbp]
  \DontPrintSemicolon
  \LinesNumbered
  \KwIn{Data $\{{\bf{X}}^v\}_{v=1}^V$; latent dimension $D$, modal weight $\lambda$.}
  \KwOut{Mappings $\{{\bf R}^v\}$, codebook ${\bf C}$, binary codes $\{{\bf B}^v\}$.}
  Initialize $\{{\bf R}^v\}$ by identity, ${\bf C}$ randomly, $\{{\bf B}^v\}$ by NN search.\;
  \Repeat{Convergence}{
		Update $\{{\bf R}^v\}$ by Orthogonal Procrustes as Eqn.~\eqref{eqn:updateR}.\;
		Update ${\bf C}$ by Quadratic Optimization as Eqn.~\eqref{eqn:updateC}.\;
		\For{$n \leftarrow 1$ \KwTo $N_v$}{
			Update $\{{\bf b}^v_n\}$ by ICM or greedy algorithm as Eqn.~\eqref{eqn:updateB}.\;
		}
	}
  \caption{CCQ: Composite Correlation Quantization} \label{alg:CCQ}
\end{algorithm}

\subsection{Large-Scale Implementation}
Batch algorithms are memory-inefficient for large-scale datasets, hence we formulate CCQ optimization into mini-batch algorithms for large-scale problems \cite{cite:VLDB14MSAE}. The main idea is to split the training set into mini-batches and load a fraction of data points into memory each time. Hence, the memory usage stays constant when the size of the training set increases. The update of ${\bf B}^v$ in Equation~\eqref{eqn:updateB} is already mini-batch in that update of each data point is independent on the other data points. To update ${\bf R}^v$ in mini-batch, we notice that the matrix for SVD is ${{\bf{X}}^v}{\left[ {{\bf{C}}\delta \left( {{{\bf{B}}^v}} \right)} \right]}^{\mathsf{T}} \in \mathbb{R}^{P_v \times D}$, which if given, the SVD can be solved in $O(P_v^2 D)$, independent on the number of data points. We thus formulate the matrix for SVD in a point-wise summation form as $\sum\nolimits_{n = 1}^{{N_v}} {{\bf{x}}_n^v{{\left[ {{\bf{C}}\delta \left( {{{\bf{b}}_n}} \right)} \right]}^{\mathsf{T}}}}$, then it can be computed by traversing all data points in a mini-batch paradigm. Similarly, the update of ${\bf C}$ can also be formulated in a summation form for mini-batch implementation. Note that we can allocate all available memory to mini-batch and trade off memory and disk reading costs.

\subsection{Computational Complexity}
We analyze the cost of each iteration to show CCQ scales linearly to sample size $N_v$. To update ${\bf R}^v$, it takes $O\left( {{N_v}{P_v}D + {N_v}DM} \right)$ to prepare the problem and $O\left( {P_v^2D + {D^3}} \right)$ to compute the SVD. To update ${\bf C}$, it takes $O\left( {{N_v}{P_v}D + {N_v}DM + {N_v}{M^2}} \right)$ to prepare the problem and $O\left( {D{M^2}{K^2} + {M^3}{K^3}} \right)$ to compute the quadratic optimization. To update ${\bf B}^v$, it takes $O\left( {{N_v}{P_v}D + {N_v}DMK{T_i}} \right)$, where $T_i$ is the number of iterations and $T_i = 3$ in ICM algorithm or $T_i = 1$ in greedy algorithm can obtain satisfactory performance. As a rule of thumb, $D = H$ and $K = 256$ are good choices for most applications. For longer codes, update of ${\bf C}$ is inefficient, in which case we can adopt the online L-BFGS algorithm for speedup.

\subsection{Approximation Error Analysis}
Given a query ${\bf q}^{\bar v}$ and a database point ${\bf x}^v_{n}$, after transformed by correlation-maximal mappings ${\tilde{{\bf q}}}^{\bar v} = {\bf R}^{\bar v\mathsf{T}} {\bf q}^{{\bar v}}$ and ${\tilde{\bf x}}^v_n = {\bf R}^{v\mathsf{T}} {\bf x}^v_{n}$, they can be comparable in the modality-consistent latent space, and their Euclidean distance is computed as $d\left( {{{\tilde{\bf{q}}}^{\bar v}},{\tilde{\bf{x}}}_n^v} \right) = {\left\| {{{{\tilde{\bf q}}}^{\bar v}} - {\tilde{\bf x}}_n^v} \right\|_2}$. As computing Euclidean distance on real-valued vectors is too costly for large-scale search, we compute AQD \eqref{eqn:AQD} on binary codes. Hence, we need to analyze the error bound of using AQD to approximate real-valued distance. Denote ${\hat{\bf x}}^v_n = {\sum\nolimits_{m = 1}^M {{{\bf{C}}_m}{\bf{b}}_{mn}^v} }$ the decoded vector of ${\bf x}^v_n$, then ${\text{AQD}}\left( {{{\bf{q}}^{\bar v}},{\bf{x}}_n^v} \right) = d\left( {{{\tilde{\bf{q}}}^{\bar v}},{\hat{\bf x}}_n^v} \right) + \epsilon$, $\epsilon$ is a constant.
\begin{theorem}[Bound]
	The error is bounded by learning loss
	\begin{equation}
		\textstyle
		\left| {d\left( {{{{\tilde{\bf q}}}^{\bar v}},{\tilde{\bf x}}_n^v} \right) - d\left( {{{{\tilde{\bf q}}}^{\bar v}},{\hat{\bf x}}_n^v} \right)} \right| \leqslant {\left\| {{\bf{x}}_n^v - {{\bf{R}}^v}\sum\nolimits_{m = 1}^M {{{\bf{C}}_m}{\bf{b}}_{mn}^v} } \right\|_2}.
	\end{equation}
\end{theorem}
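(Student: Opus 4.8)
The plan is to recognize the left-hand side as an instance of the reverse triangle inequality, and then use the orthogonality constraint ${\bf R}^{v\mathsf{T}}{\bf R}^v = {\bf I}_{D\times D}$ to transport the resulting bound from the $D$-dimensional isomorphic latent space back to the original $P_v$-dimensional feature space. The quantity being bounded depends only on modality $v$ of the database point, so the query plays no role in the argument.

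First I would write $d\bigl(\tilde{\bf q}^{\bar v},\tilde{\bf x}_n^v\bigr) = \bigl\|\tilde{\bf q}^{\bar v} - \tilde{\bf x}_n^v\bigr\|_2$ and $d\bigl(\tilde{\bf q}^{\bar v},\hat{\bf x}_n^v\bigr) = \bigl\|\tilde{\bf q}^{\bar v} - \hat{\bf x}_n^v\bigr\|_2$, and apply the reverse triangle inequality $\bigl|\,\|{\bf a}\|_2 - \|{\bf b}\|_2\,\bigr| \le \|{\bf a} - {\bf b}\|_2$ with ${\bf a} = \tilde{\bf q}^{\bar v} - \tilde{\bf x}_n^v$ and ${\bf b} = \tilde{\bf q}^{\bar v} - \hat{\bf x}_n^v$. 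The common term $\tilde{\bf q}^{\bar v}$ cancels in ${\bf a} - {\bf b}$, so the left-hand side is at most $\bigl\|\tilde{\bf x}_n^v - \hat{\bf x}_n^v\bigr\|_2$, and it remains to show $\bigl\|\tilde{\bf x}_n^v - \hat{\bf x}_n^v\bigr\|_2 \le \bigl\|{\bf x}_n^v - {\bf R}^v \sum_{m=1}^M {\bf C}_m {\bf b}_{mn}^v\bigr\|_2$.

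Second I would exploit the orthogonality constraint. Since $\tilde{\bf x}_n^v = {\bf R}^{v\mathsf{T}}{\bf x}_n^v$ and, using ${\bf R}^{v\mathsf{T}}{\bf R}^v = {\bf I}_{D\times D}$, also $\hat{\bf x}_n^v = {\bf R}^{v\mathsf{T}}\bigl({\bf R}^v \hat{\bf x}_n^v\bigr)$, we can factor $\tilde{\bf x}_n^v - \hat{\bf x}_n^v = {\bf R}^{v\mathsf{T}}\bigl({\bf x}_n^v - {\bf R}^v \hat{\bf x}_n^v\bigr)$. Because the columns of ${\bf R}^v$ are orthonormal ($D \le P_v$), every singular value of ${\bf R}^v$ — hence of ${\bf R}^{v\mathsf{T}}$ — equals one, so ${\bf R}^{v\mathsf{T}}$ is a contraction: $\|{\bf R}^{v\mathsf{T}}{\bf z}\|_2 \le \|{\bf z}\|_2$ for every ${\bf z} \in \mathbb{R}^{P_v}$ (equivalently, ${\bf R}^v{\bf R}^{v\mathsf{T}}$ is an orthogonal projector). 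Applying this with ${\bf z} = {\bf x}_n^v - {\bf R}^v \hat{\bf x}_n^v$ and substituting $\hat{\bf x}_n^v = \sum_{m=1}^M {\bf C}_m {\bf b}_{mn}^v$ yields exactly the claimed bound; chaining it with the first step completes the proof. The additive constant $\epsilon$ of Equation~\eqref{eqn:AQD} never enters, since it is subtracted off before the two distances are compared.

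The only point requiring care — the would-be main obstacle — is the direction of the orthogonality identity: ${\bf R}^v$ is a tall $P_v \times D$ matrix, so ${\bf R}^{v\mathsf{T}}{\bf R}^v = {\bf I}_{D\times D}$ holds but ${\bf R}^v {\bf R}^{v\mathsf{T}}$ is only a projector on $\mathbb{R}^{P_v}$, not the identity. The argument must therefore insert ${\bf R}^{v\mathsf{T}}{\bf R}^v = {\bf I}$ on the latent side (acting on $\hat{\bf x}_n^v$) and then pull ${\bf R}^{v\mathsf{T}}$ out front as a contraction, not the other way around. Everything else is a one-line use of standard norm inequalities.
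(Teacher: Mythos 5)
Your proof is correct and takes essentially the same route as the paper: both start from the (reverse) triangle inequality to reduce the claim to bounding $d\left( {\tilde{\bf x}}_n^v, {\hat{\bf x}}_n^v \right)$ by the input-space reconstruction error, and both then exploit the column-orthonormality of ${\bf R}^v$. The only cosmetic difference is in how that last inequality is justified — you factor out ${\bf R}^{v\mathsf{T}}$ and use that it is a contraction (equivalently, ${\bf R}^v{\bf R}^{v\mathsf{T}}$ is an orthogonal projector), while the paper adds the non-negative term $\left\| {\bf R}_{\bot}^{v\mathsf{T}}{\bf x}_n^v \right\|_2^2$ for the orthogonal complement and recognizes the sum as the full-space squared norm; the two justifications are equivalent.
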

\begin{proof}
From the triangle inequality, $ \left| {d\left( {{{{\tilde{\bf q}}}^{\bar v}},{\tilde{\bf x}}_n^v} \right) - d\left( {{{{\tilde{\bf q}}}^{\bar v}},{\hat{\bf x}}_n^v} \right)} \right| \leqslant d\left( {{\tilde{\bf x}}_n^v,{\hat{\bf x}}_n^v} \right) $. Then
\begin{equation}\label{eqn:boundapprox}
	\begin{aligned}
		d^2\left( {{\tilde{\bf x}}_n^v,{\hat{\bf x}}_n^v} \right) & = \textstyle{{\left\| {{{\bf{R}}^{v{\mathsf{T}}}}{\bf{x}}_n^v - \sum\nolimits_{m = 1}^M {{{\bf{C}}_m}{\bf{b}}_{mn}^v} } \right\|^2_2}} \\
		& \leqslant \textstyle{{\left\| {{{\bf{R}}^{v{\mathsf{T}}}}{\bf{x}}_n^v - \sum\nolimits_{m = 1}^M {{{\bf{C}}_m}{\bf{b}}_{mn}^v} } \right\|^2_2} + {\left\| {{{\bf{R}}_{ \bot }^{v{\sf T}}}{\bf{x}}_n^v} \right\|^2_2}} \\
		& = \textstyle{{\left\| {{\bf{x}}_n^v - {{\bf{R}}^v}\sum\nolimits_{m = 1}^M {{{\bf{C}}_m}{\bf{b}}_{mn}^v} } \right\|^2_2}},
	\end{aligned}
\end{equation}
where ${\bf R}_{ \bot }^v$ is an orthogonal complement of ${\bf{R}}^v$, ${{{\bf{R}}^{v{\sf T}}}{\bf{R}}_ \bot ^v} = {\bf{0}}$.
\end{proof}
The theorem confirms that the error of using AQD to approximate real-valued distance is statistically bounded by CCQ learning loss. Hence, CCQ is more accurate than sign thresholding methods \cite{cite:NIPS09SH}. An important advantage of CCQ in Equation~\eqref{eqn:boundapprox} is that mapping ${\bf R}^v$ is learned by a joint optimization of canonical correlation analysis (CCA) and principal component analysis (PCA) corresponding to the first and second terms of Line 2 in Equation~\eqref{eqn:boundapprox}. This can be much more effective than most CCA-based methods \cite{cite:IJCAI11CVH,cite:AAAI14SCM,cite:IJCAI15QCH}.

\section{Experiments}\label{section:Experiments}
We conduct extensive evaluation of CCQ against state of the art methods on three public multimodal datasets. We investigate both effectiveness and efficiency in terms of search precision, recall, and time. The codes, data, and configurations will be available online.

\subsection{Datasets}
The evaluation is conducted on three datasets: NUS-WIDE \cite{cite:CIVR09NusWide}, Wiki \cite{cite:TPAMI14Wiki}, and Flickr1M \cite{cite:ICMR08Flickr1M}, with statistics depicted in Table~\ref{table:datasets}. We preprocess all datasets by applying ZCA \cite{cite:VLDB14MSAE} to normalize each dimension of image/text features to be zero mean and unit variance.

\textbf{NUS-WIDE}\footnote{\scriptsize\url{http://lms.comp.nus.edu.sg/research/NUS-WIDE.htm}} is a Web image dataset containing 269, 648 images downloaded from Flickr, each associated with 6 tags on average.  There are 81 ground truth concepts manually annotated for search evaluation. Following prior works \cite{cite:MM13LCMH,cite:VLDB14MSAE}, we prune the original NUS-WIDE to form a new dataset consisting of 195,834 image-text pairs by keeping the pairs that belong to one of the 21 most frequent concepts. The images are represented by 500-dimensional bag-of-words vectors extracted from the SIFT features using k-means, and the texts are represented by 1,000-dimensional vectors extracted from the tag occurrence features using PCA. A query set of 2,000 image-text pairs are randomly sampled from the dataset, while the remaining 193,834 image-text pairs are serving as the database. The hash models are learned on the training set containing 10,000 image-text pairs randomly sampled from the database \cite{cite:MM13LCMH,cite:SIGMOD13IMH}.

\textbf{Wiki}\footnote{\scriptsize\url{http://www.svcl.ucsd.edu/projects/crossmodal}} contains 2,866 image-text pairs selected from Wikipedia's featured articles comprised of multiple sections of images and texts. Every image-text pair is labeled by one of the 10 concepts in the article categories. Each image is represented by a 128-dimensional bag-of-words vector extracted from SIFT features, and each text is represented by the probability distribution over 10 topics learned by a latent Dirichlet allocation (LDA) model. The dataset is released with a query set of 693 pairs and a database of 2,173 pairs, and the whole database is used as the training set for hash coding \cite{cite:TPAMI14Wiki,cite:MM13LCMH}.

\textbf{Flickr1M} comprises 1,000,000 images associated with tags from Flickr, in which 25,000 are labeled with 38 concepts while the remaining 975,000 are unlabeled. The public available preprocessed dataset\footnote{\scriptsize\url{http://www.cs.toronto.edu/~nitish/multimodal}} is employed for evaluation, in which each image is represented by a 3,857-dimensional vector concatenated by local SIFT feature, global GIST feature, etc \cite{cite:JMLR14MMDL}. Each text is represented by a 2,000-dimensional vector extracted from tag occurrences. The query set contains 1,000 image-text pairs randomly sampled from the 25,000 labeled pairs, and the rest 24,000 labeled pairs are used as the database. In scalability test of CCQ (Section \ref{section:scalability}), all 975,000 unlabeled pairs are used as the training set for learning hash codes.

\begin{table}[tbp]
    \addtolength{\tabcolsep}{6pt}
    \centering
    \caption{The Statistics of Three Datasets}
    \label{table:datasets}
    \begin{tabular}{|c|c|c|c|}
        \hline
        {Dataset} & {NUS-WIDE} & {Wiki} & {Flickr1M}\\
        \hline
        \hline
        Complete Set & 195,834 & 2,866 & 1,000,000 \\
        \hline
        Labeled Set & 195,834 & 2,866 & 25,000 \\
        \hline
        Query Set & 2,000 & 693 & 1,000 \\
        \hline
        Database & 193,834 & 2,173 & 24,000 \\
        \hline
        Training Set & 10,000 & 2,173 & 975,000 \\
        \hline
    \end{tabular}
\end{table}

\begin{figure*}[!htb]
    \centering
    \subfigure[${I \rightarrow T}$ @ 16 bits]{
        \includegraphics[width=0.23\textwidth]{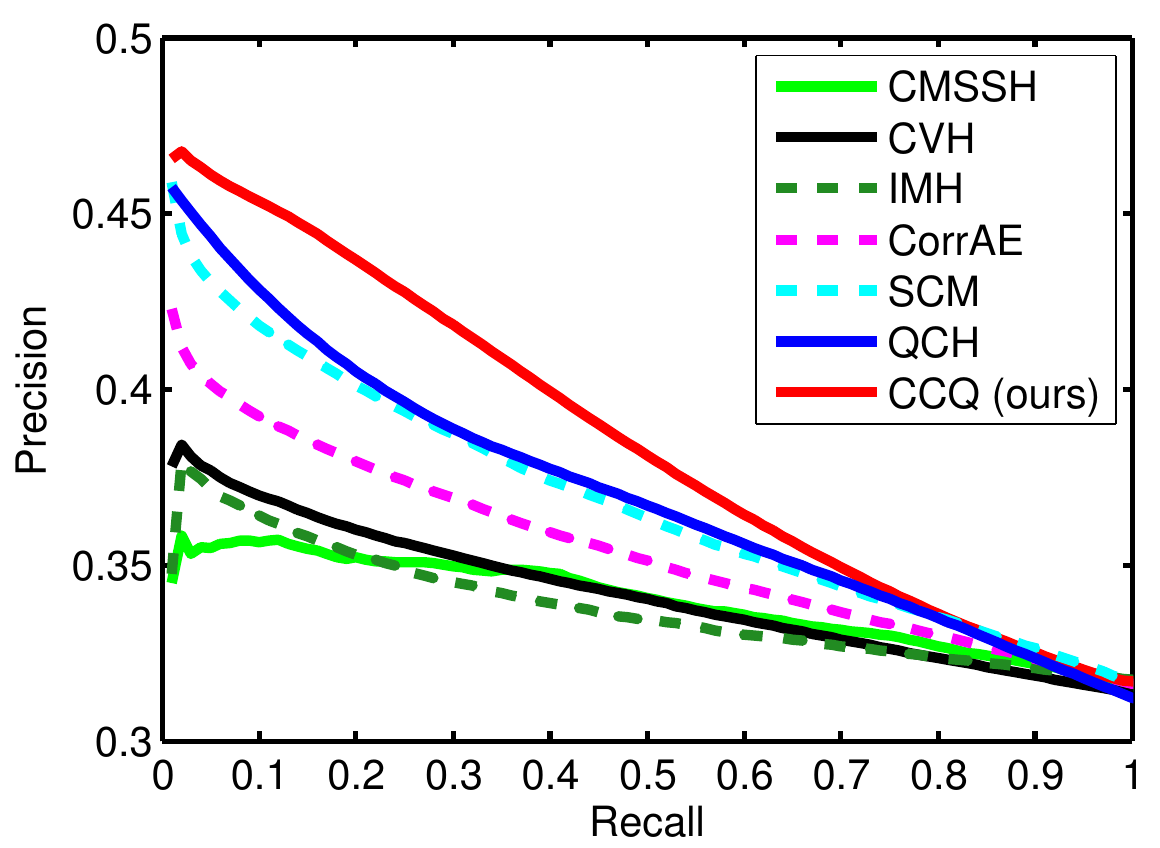}
        \label{fig:pr_nuswide_1}
    }
    \subfigure[${I \rightarrow T}$ @ 32 bits]{
        \includegraphics[width=0.23\textwidth]{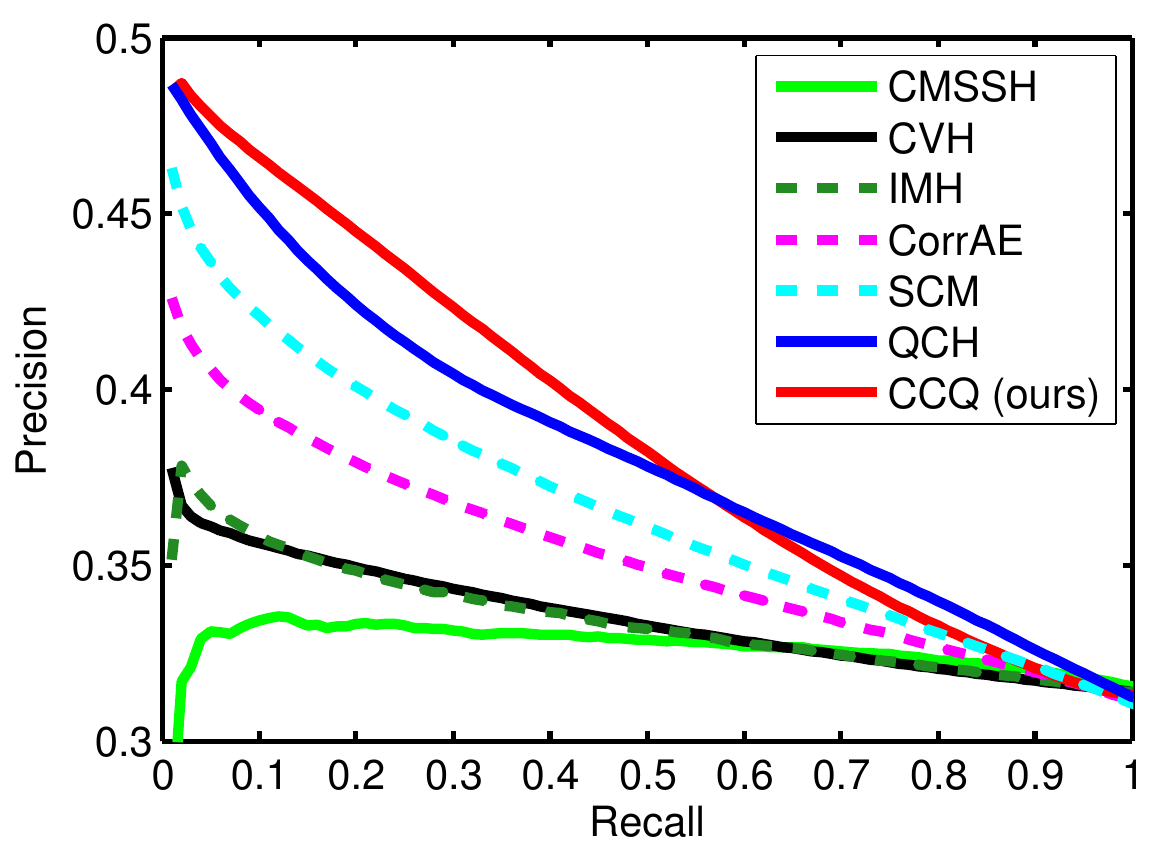}
        \label{fig:pr_nuswide_2}
    }
    \subfigure[${T \rightarrow I}$ @ 16 bits]{
        \includegraphics[width=0.23\textwidth]{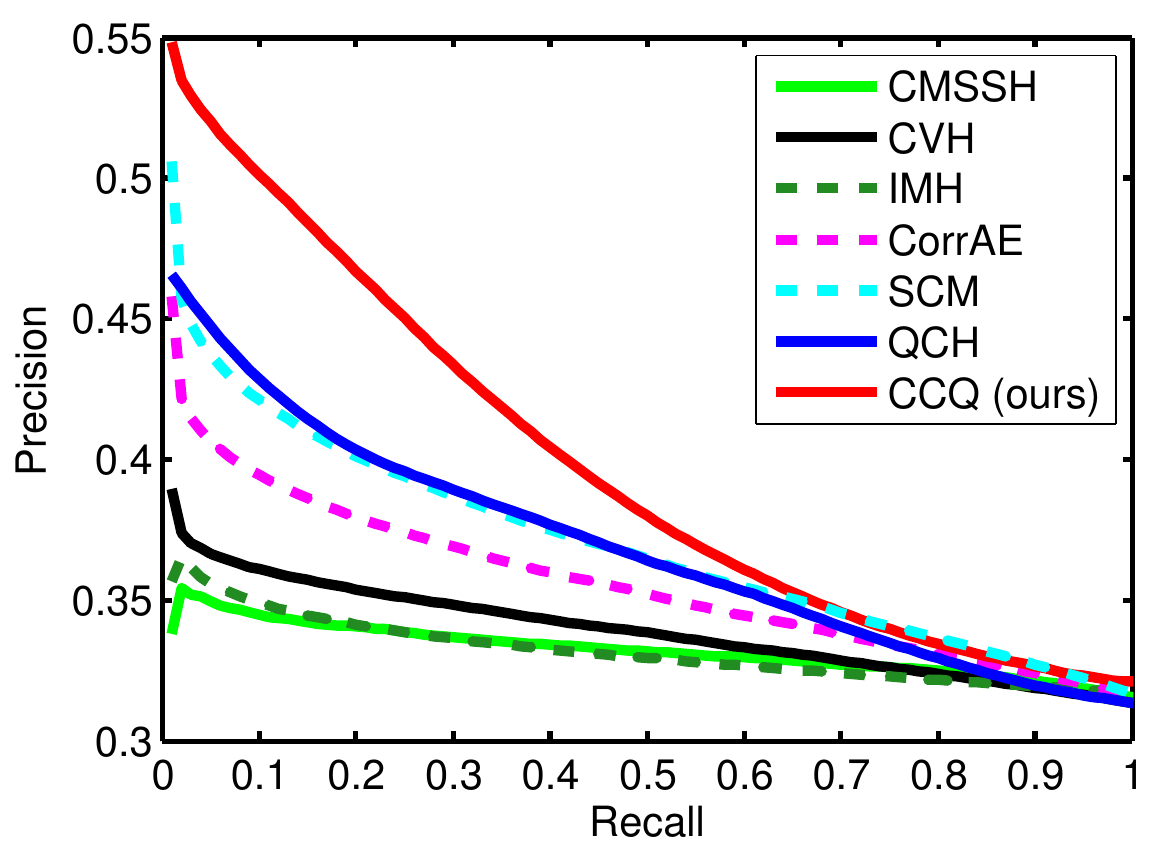}
        \label{fig:pr_nuswide_3}
    }
    \subfigure[${T \rightarrow I}$ @ 32 bits]{
        \includegraphics[width=0.23\textwidth]{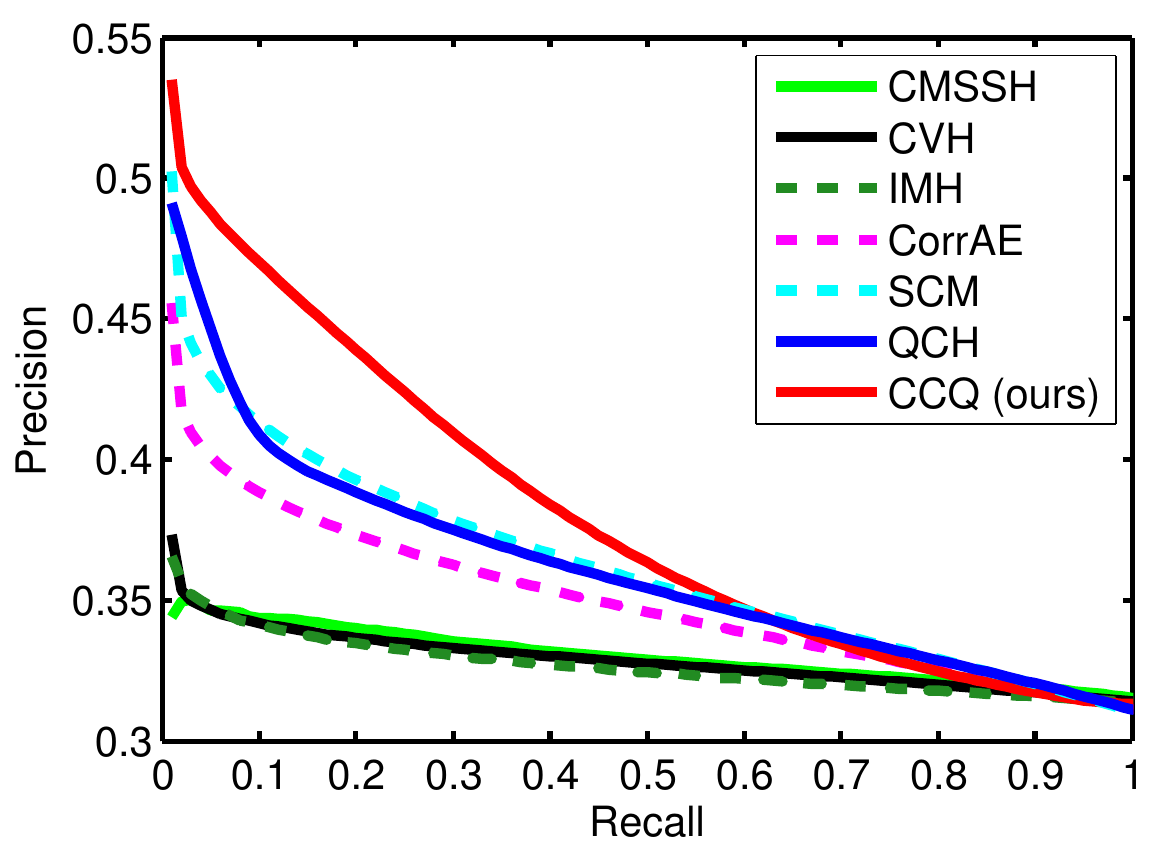}
        \label{fig:pr_nuswide_4}
    }
    \\
    \vspace{-5pt}
    \subfigure[${I \rightarrow T}$ @ 16 bits]{
        \includegraphics[width=0.23\textwidth]{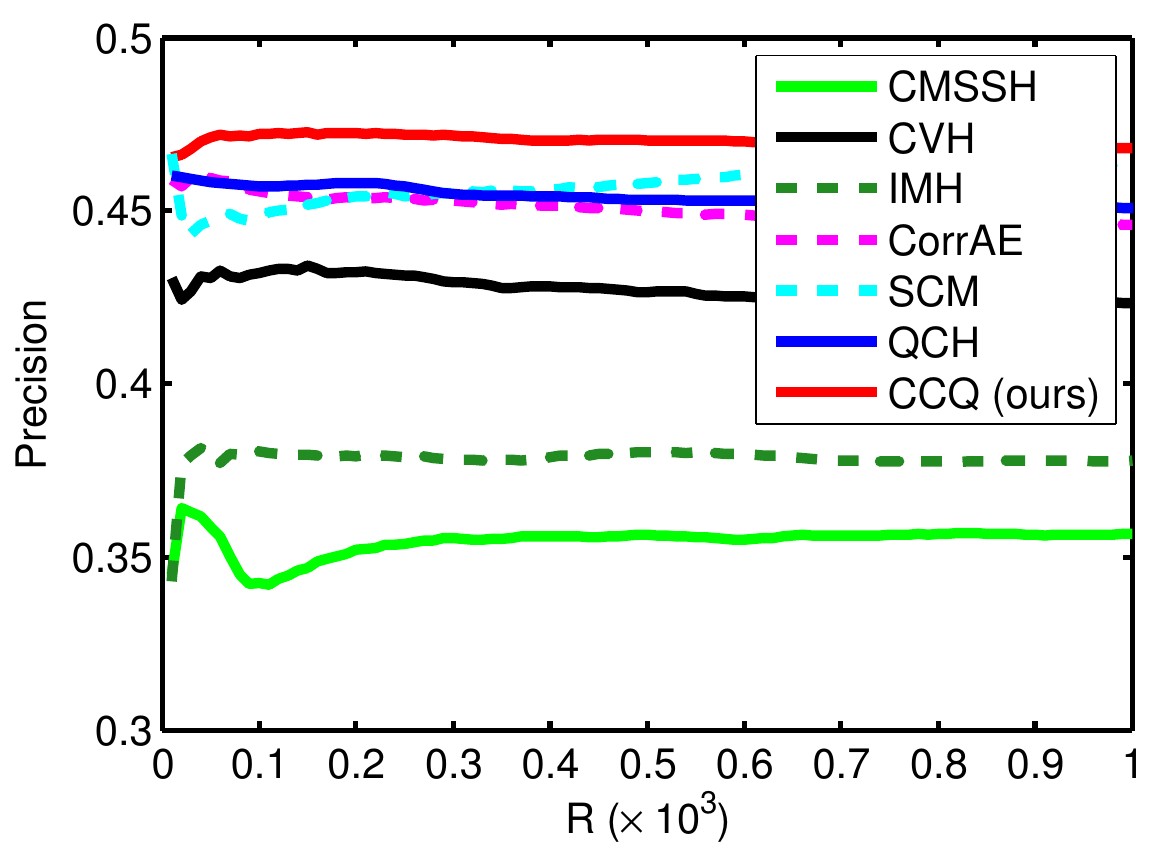}
        \label{fig:prec_nuswide_1}
    }
    \subfigure[${I \rightarrow T}$ @ 32 bits]{
        \includegraphics[width=0.23\textwidth]{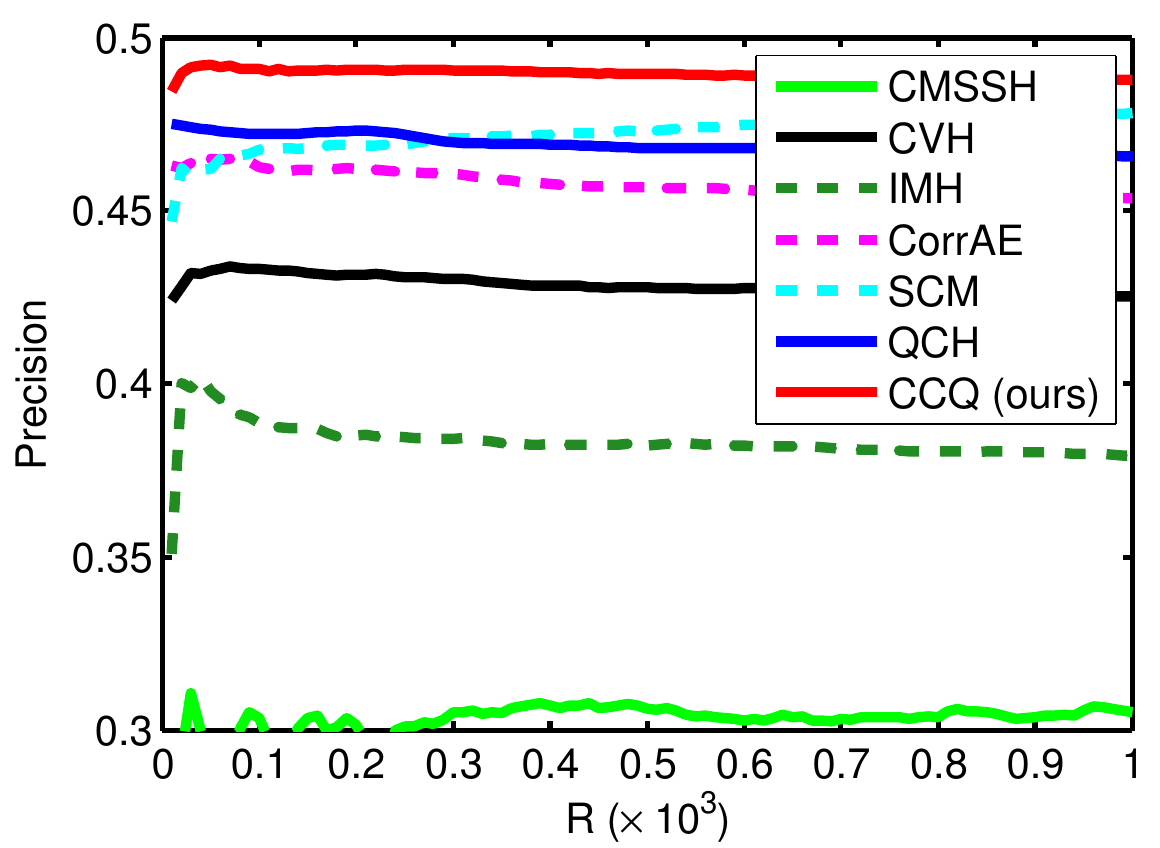}
        \label{fig:prec_nuswide_2}
    }
    \subfigure[${T \rightarrow I}$ @ 16 bits]{
        \includegraphics[width=0.23\textwidth]{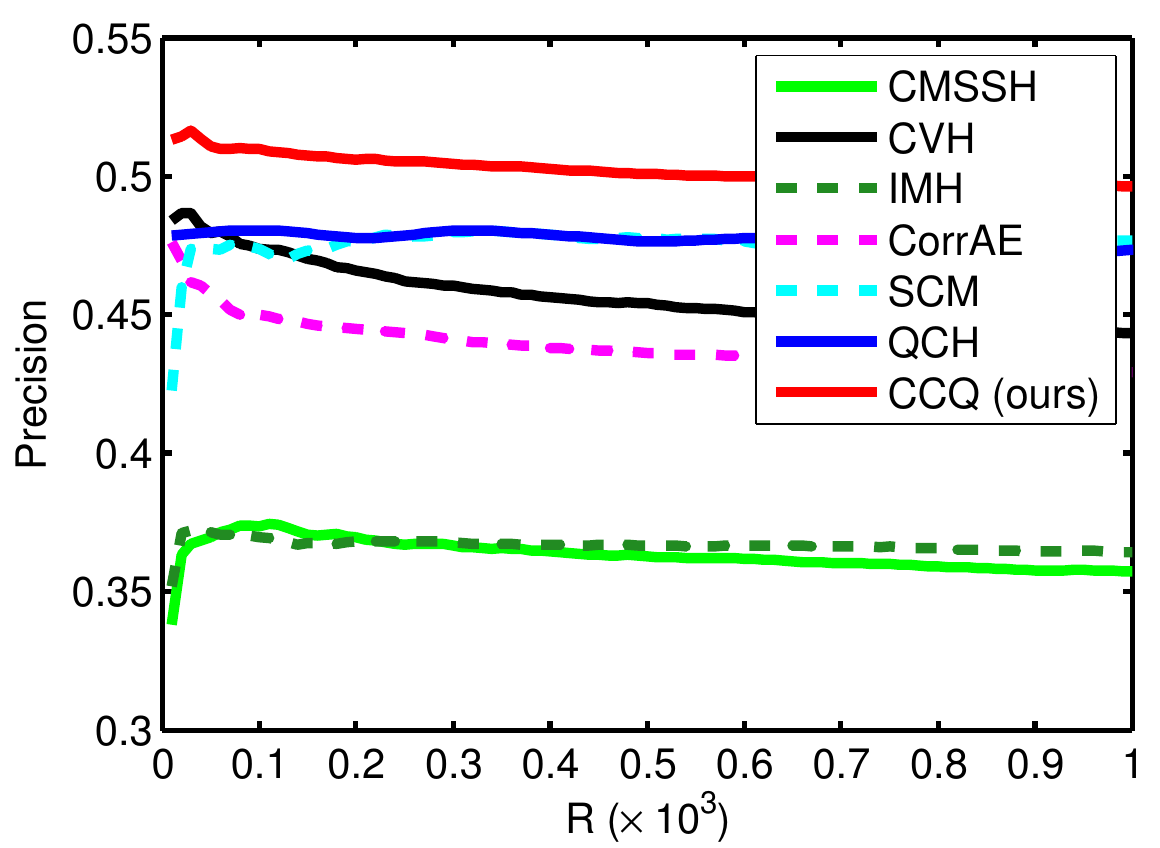}
        \label{fig:prec_nuswide_3}
    }
    \subfigure[${T \rightarrow I}$ @ 32 bits]{
        \includegraphics[width=0.23\textwidth]{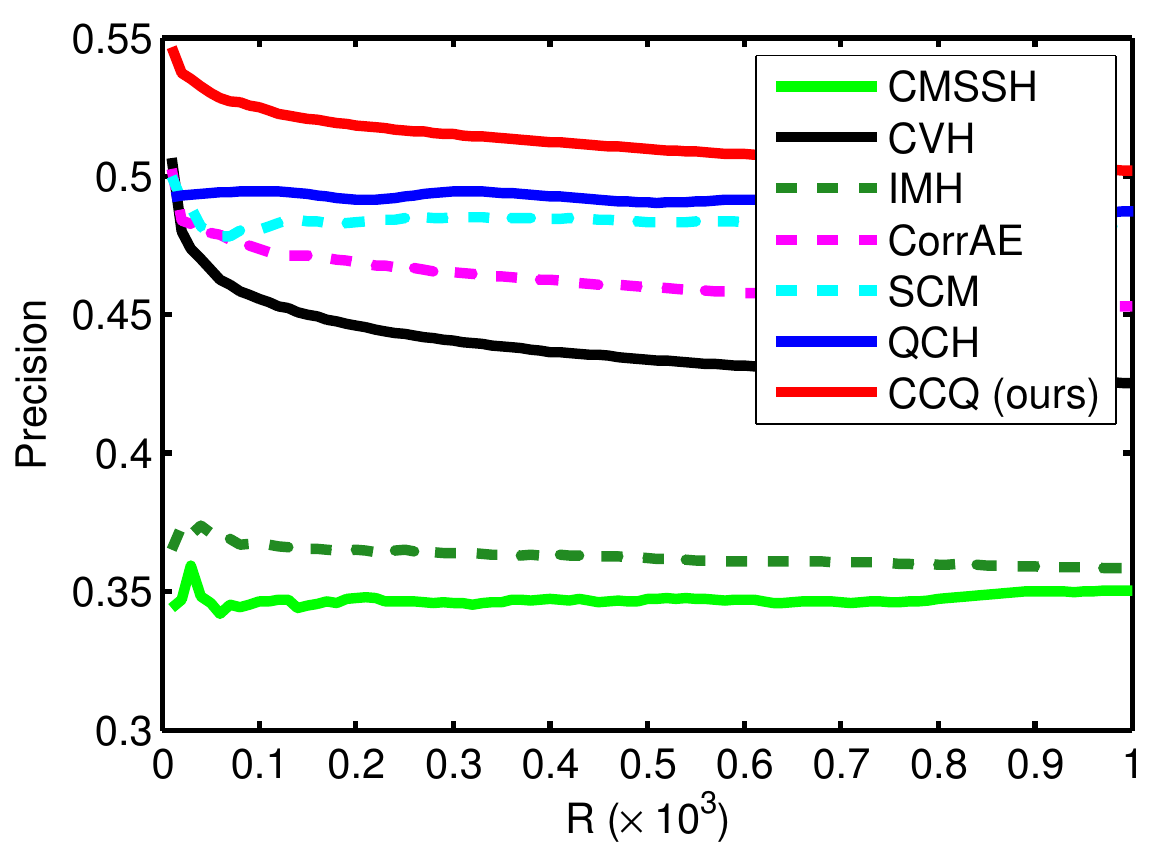}
        \label{fig:prec_nuswide_4}
    }
    \vspace{-10pt}
    \caption{Precision-recall curves (top) and precision@R curves (bottom) on NUS-WIDE cross-modal search tasks @ 16 and 32 bits.}
    \label{fig:nuswide}
\end{figure*}

\subsection{Comparison Methods} 
We compare CCQ against many state of the art hashing methods.
\begin{itemize}
	\setlength\itemsep{0pt}
	\item \textbf{Unsupervised hashing:} Cross-View Hashing (\textbf{CVH})\footref{foot:CMSSH} \cite{cite:IJCAI11CVH} and Inter-Media Hashing (\textbf{IMH})\footnote{\scriptsize\url{http://staff.itee.uq.edu.au/shenht/UQ_IMH}} \cite{cite:SIGMOD13IMH} are unsupervised hashing methods that extend spectral hashing to exploit the local structure of multimodal data for learning binary codes.
	\item \textbf{Deep hashing:} Correspondence Auto-Encoders (\textbf{CorrAE})\footnote{\scriptsize\url{https://github.com/fangxiangfeng/deepnet}} \cite{cite:MM14CorrAE} learns latent features via unsupervised deep auto-encoders, which captures both intra-modal and inter-modal correspondences, and binarizes latent features via sign thresholding.
	\item \textbf{Supervised hashing:} Cross-Modal Similarity-Sensitive Hashing (\textbf{CMSSH})\footnote{\scriptsize\url{http://www.cse.ust.hk/~dyyeung/code/mlbe.zip}\label{foot:CMSSH}} \cite{cite:CVPR10CMSSH}, Semantic Correlation Maximization (\textbf{SCM}) \cite{cite:AAAI14SCM}, and Quantized Correlation Hashing (\textbf{QCH}) are supervised hashing methods which embed multimodal data into a common Hamming space using supervised metric learning.
\end{itemize}

\subsection{Evaluation Protocols}
We perform four types of multimodal retrieval schemes: (1) ${I \rightarrow I}$: use image queries to retrieve relevant images; (2) ${T \rightarrow T}$: use text queries to retrieve relevant texts; (3) ${I \rightarrow T}$: use image queries to retrieve relevant texts; and (4) ${T \rightarrow I}$: use text queries to retrieve relevant images. The first two tasks are intra-modal retrieval and the last two tasks are cross-modal retrieval. As CCQ can also handle multimodal search where both modalities are available for the database, we show the results of multimodal retrieval schemes where each image-text pair is quantized into a unified hash code by fusing knowledge of different modalities: (5) $I \rightarrow IT$: use image queries to retrieve relevant image-text pairs; (6) $T \rightarrow IT$: use text queries to retrieve relevant image-text pairs. The baseline methods do not support multimodal search because they do not use shared coding for different modalities of the same object. Given a query, the ground truth is defined as: if a result shares at least one common concept with the query, it is relevant; otherwise it is irrelevant.

We adopt \emph{Mean Average Precision} (MAP) to measure the effectiveness of multimodal search \cite{cite:SIGMOD13IMH,cite:MM13LCMH,cite:VLDB14MSAE,cite:SIGIR14DCDH,cite:MM14CorrAE}. Given a set of queries, we first calculate Average Precision (AP) of each query as
\begin{equation}
	{\text{AP}}\text{@}R = \frac{{\sum\nolimits_{r = 1}^R {P\left( r \right)\delta \left( r \right)} }}{{\sum\nolimits_{r' = 1}^R {\delta \left( {r'} \right)} }},
\end{equation}
where $R$ is the number of retrieved documents, $P(r)$ denotes the precision of the top $r$ retrieved results, and $\delta(r) = 1$ if the $r$-th retrieved result is a true neighbor of the query, otherwise $\delta(r) = 0$. Then MAP is computed as the mean of all the queries' average precision, and the larger the MAP, the better the retrieval performance. In the experiments, we follow \cite{cite:SIGIR13CMR,cite:SIGIR14DCDH,cite:VLDB14MSAE} to report MAP@$R=50$. We also report another two standard retrieval criteria, \emph{precision-recall} curves and \emph{precision@top-$R$} curves of all retrieval tasks. In addition to effectiveness, we report \emph{time and memory} costs as the efficiency measures for query processing and model training.

The CCQ approach involves two model parameters: dimension of modality-consistent subspace $D$ and modality trade-off weight $\lambda$. In principle, CCQ is almost immune to different choices of $D$, as long as $D$ is large enough to retain the majority amount of covariance information as LSA. While no prior knowledge is available, we can simply set equal weights $\lambda = 1$ for different modalities, which can already achieve satisfactory performance. Nonetheless, for image-text bimodal search, the text modality usually carry more semantic information, hence we equip CCQ with the flexibility for selecting the optimal $\lambda$ to encode such important prior knowledge. Given annotation ground truths as in the evaluation datasets, we can automatically select $D$ and $\lambda$ using cross-validation. However, we choose to blindly fix $\lambda = 5$ throughout the comparative study. This is desirable as cross-validation may be impossible in the pervasive unsupervised multimodal search. We will study parameter sensitivity in Section~\ref{section:Sensitivity} to validate that CCQ can consistently outperform the state of the arts with a wide range of parameter configurations.

For the comparison methods, we adopt cross-validation to select their optimal parameters, respectively. As cross-validation requires annotation ground truths, this further confirms CCQ's superior parameter stability. Subject to computation burden, it is too costly to train CMSSH and IMH on the complete Flickr1M dataset, hence we randomly sample 10,000 image-text pairs to train these models. Each experiment repeats ten runs and the average result is reported.

\begin{table*}[!htb]
    \addtolength{\tabcolsep}{0.6pt}
    \centering 
    \caption{Mean Average Precision (MAP) Comparison of Six Multimodal Retrieval Tasks on Three Standard Datasets}
    \label{table:MAP}
    \small
    \begin{tabular}{|c|c|cccc|cccc|cccc|}
        \hline
        \multirow{2}{25pt}{\centering Task} & \multirow{2}{25pt}{\centering Method} & \multicolumn{4}{c|}{NUS-WIDE} & \multicolumn{4}{c|}{Wiki} & \multicolumn{4}{c|}{Flickr1M}\\
        \cline{3-14}
        & & 8 bits & 16 bits  & 32 bits  & 64 bits  & 8 bits  & 16 bits  & 32 bits  & 64 bits  & 8 bits & 16 bits  & 32 bits  & 64 bits \\
        \hline
        \hline
        \multirow{7}{25pt}{\centering ${I \rightarrow I}$} 
        & CVH \cite{cite:IJCAI11CVH} & 0.3954 & 0.4542 & 0.4759 & 0.4780 & 0.1988 & 0.1969 & 0.2042 & 0.2058 & 0.6050 & 0.6328 & 0.6615 & 0.6712 \\
        & IMH \cite{cite:SIGMOD13IMH} & 0.4313 & 0.4545 & 0.4155 & 0.4005 & 0.1910 & 0.1963 & 0.1937 & 0.1935 & 0.5239 & 0.5725 & 0.5736 & 0.5748 \\
        & CorrAE \cite{cite:MM14CorrAE} & 0.4223 & 0.4478 & 0.4587 & 0.4796 & 0.2055 & 0.2086 & 0.2188 & 0.2194 & 0.6145 & 0.6397 & 0.6588 & 0.6654 \\
        \cline{2-14}
        & CMSSH \cite{cite:CVPR10CMSSH} & 0.3776 & 0.4060 & 0.4356 & 0.4490 & 0.1987 & 0.1979 & 0.2007 & 0.2126 & 0.5738 & 0.6304 & 0.6587 & 0.6932 \\
        & SCM \cite{cite:AAAI14SCM} & 0.4258 & 0.4578 & 0.4695 & 0.4831 & 0.2048 & 0.2103 & 0.2177 & 0.2212 & 0.5926 & 0.6257 & 0.6615 & 0.6801 \\
        & QCH \cite{cite:IJCAI15QCH} & 0.4289  & 0.4557 & 0.4786 & 0.4898 & 0.2087 & 0.2155 & 0.2198 & 0.2252 & 0.6165 & 0.6586 & 0.6787 & 0.6885 \\
        \cline{2-14}
        & CCQ (ours) & \textbf{0.4711} & \textbf{0.4859} & \textbf{0.4921} & \textbf{0.4932} & \textbf{0.2226} & \textbf{0.2265} & \textbf{0.2373} & \textbf{0.2386} & \textbf{0.6714} & \textbf{0.7092} & \textbf{0.7318} & \textbf{0.7451} \\
        \hline
        \hline
        \multirow{7}{25pt}{\centering ${T \rightarrow T}$} 
        & CVH \cite{cite:IJCAI11CVH} & 0.5825 & 0.6485 & 0.6837 & \textbf{0.7189} & 0.4049 & 0.5506 & 0.6075 & 0.6239 & 0.5812 & 0.6085 & 0.6242 & 0.6337 \\
        & IMH \cite{cite:SIGMOD13IMH} & 0.4531 & 0.4740 & 0.5421 & 0.6202 & 0.3805 & 0.4623 & 0.5773 & 0.5989 & 0.5585 & 0.5973 & 0.6360 & 0.6436 \\
        & CorrAE \cite{cite:MM14CorrAE} & 0.5501 & 0.5856 & 0.6344 & 0.6678 & 0.5765 & 0.5889 & 0.6045 & 0.6123 & 0.6060 & 0.6176 & 0.6389 & 0.6443 \\
        \cline{2-14}
				& CMSSH \cite{cite:CVPR10CMSSH} & \textbf{0.5911} & 0.5968 & 0.6215 & 0.6613 & 0.5503 & 0.6065 & 0.6188 & 0.6232 & 0.5487 & 0.5573 & 0.5583 & 0.5614 \\
        & SCM \cite{cite:AAAI14SCM} & 0.5524 & 0.6315 & 0.6606 & 0.6736 & 0.5814 & 0.6051 & 0.6189 & 0.6324 & 0.5924 & 0.6320 & 0.6410 & 0.6485 \\
        & QCH \cite{cite:IJCAI15QCH} & 0.5706 & \textbf{0.6586} & 0.6796 & 0.6855 & 0.6002 & 0.6128 & 0.6226 & 0.6355 & 0.6022 & 0.6427 & \textbf{0.6554} & \textbf{0.6686} \\
        \cline{2-14}
        & CCQ (ours) & \textbf{0.5913} & 0.6481 & \textbf{0.6917} & 0.7069 & \textbf{0.6017} & \textbf{0.6286} & \textbf{0.6366} & \textbf{0.6422} & \textbf{0.6090} & \textbf{0.6433} & 0.6541 & 0.6550 \\
        \hline
        \hline
        \multirow{7}{25pt}{\centering ${I \rightarrow T}$} 
        & CVH \cite{cite:IJCAI11CVH} & 0.4588 & 0.4713 & 0.4743 & 0.4740 & 0.1673 & 0.1877 & 0.1716 & 0.1696 & 0.6091 & 0.6225 & 0.6364 & 0.6199 \\
        & IMH \cite{cite:SIGMOD13IMH} & 0.4345 & 0.4399 & 0.4203 & 0.4115 & 0.1734 & 0.1896 & 0.1714 & 0.1601 & 0.5449 & 0.5646 & 0.5936 & 0.5539 \\
        & CorrAE \cite{cite:MM14CorrAE} & 0.4398 & 0.4522 & 0.4699 & 0.4964 & 0.1929 & 0.1982 & 0.2033 & 0.2155 & 0.6301 & 0.6329 & 0.6357 & 0.6401 \\
        \cline{2-14}
        & CMSSH \cite{cite:CVPR10CMSSH} & 0.3950 & 0.4052 & 0.4076 & 0.3516 & 0.1672 & 0.1727 & 0.1750 & 0.1759 & 0.5076 & 0.5272 & 0.5357 & 0.5219 \\
        & SCM \cite{cite:AAAI14SCM} & 0.4693 & 0.4648 & 0.4619 & 0.4851 & 0.2258 & 0.2372 & 0.2381 & 0.2378 & 0.6361 & 0.6493 & 0.6495 & 0.6440 \\
        & QCH \cite{cite:IJCAI15QCH} & 0.4765 & 0.4895 & 0.5050 & 0.5125 & 0.2288 & 0.2343 & 0.2368 & \textbf{0.2402} & 0.6452 & 0.6523 & 0.6685 & 0.6721 \\
        \cline{2-14}
        & CCQ (ours) & \textbf{0.5124} & \textbf{0.5161} & \textbf{0.5165} & \textbf{0.5372} & \textbf{0.2338} & \textbf{0.2349} & \textbf{0.2371} & 0.2374 & \textbf{0.6879} & \textbf{0.7081} & \textbf{0.7183} & \textbf{0.7176} \\
        \hline 
        \hline 
        {\centering ${I \rightarrow IT}$} & CCQ (ours) & 0.5074 & 0.5411 & 0.5414 & 0.5441 & 0.2512 & 0.2513 & 0.2529 & 0.2587 & 0.7063 & 0.6894 & 0.6989 & 0.6996\\
        \hline 
        \hline
        \multirow{7}{25pt}{\centering ${T \rightarrow I}$} 
        & CVH \cite{cite:IJCAI11CVH} & \textbf{0.5598} & 0.5217 & 0.5129 & 0.4875 & 0.2309 & 0.2219 & 0.2214 & 0.2350 & 0.5972 & 0.6032 & 0.5738 & 0.5794 \\
        & IMH \cite{cite:SIGMOD13IMH} & 0.4380 & 0.4582 & 0.4186 & 0.4051 & 0.2394 & 0.2227 & 0.2333 & 0.1896 & 0.5374 & 0.5536 & 0.5513 & 0.5583 \\
        & CorrAE \cite{cite:MM14CorrAE} & 0.4303 & 0.4501 & 0.4634 & 0.4880 & 0.2688 & 0.2928 & 0.3478 & 0.3566 & 0.6142 & 0.6198 & 0.6247 & 0.6431 \\
        \cline{2-14}
        & CMSSH \cite{cite:CVPR10CMSSH} & 0.3783 & 0.3499 & 0.3944 & 0.4015 & 0.2926 & 0.2991 & 0.2537 & 0.2582 & 0.5868 & 0.5732 & 0.6176 & 0.6323 \\
        & SCM \cite{cite:AAAI14SCM} & 0.4449 & 0.4859 & 0.5105 & 0.5259 & 0.3157 & 0.3698 & 0.4239 & 0.4369 & 0.6037 & 0.5998 & 0.5805 & 0.6078 \\
        & QCH \cite{cite:IJCAI15QCH} & 0.5020 & 0.5195 & \textbf{0.5489} & \textbf{0.5622} & 0.3426 & 0.3753 & \textbf{0.4411} & \textbf{0.4565} & 0.6258 & 0.6425 & 0.6485 & 0.6528 \\
        \cline{2-14}
        & CCQ (ours) & 0.5359 & \textbf{0.5410} & \textbf{0.5413} & 0.5556 & \textbf{0.3885} & \textbf{0.4000} & 0.4222 & 0.4178 & \textbf{0.6548} & \textbf{0.7026} & \textbf{0.7165} & \textbf{0.7266} \\
        \hline 
        \hline 
        {\centering ${T \rightarrow IT}$} & CCQ (ours) & 0.6022 & 0.6925 & 0.7131 & 0.7153 & 0.6355 & 0.6351 & 0.6394 & 0.6405 & 0.6942 & 0.7151 & 0.7190 & 0.7416 \\
        \hline 
    \end{tabular}
    \normalsize
\end{table*}

\subsection{Experimental Results}
We compare CCQ with state of the art methods in terms of MAP and precision-recall on 4 multimodal retrieval tasks (${I \rightarrow I}$, ${T \rightarrow T}$, ${I \rightarrow T}$, ${T \rightarrow I}$) of three datasets (NUS-WIDE, Wiki, and Flickr1M).

\subsubsection{Results on NUS-WIDE}
We evaluate CCQ against state of the arts with different lengths of hash codes, i.e. 8, 16, 32, and 64 bits, and report the MAP results in Table~\ref{table:MAP}. For all multimodal retrieval tasks, CCQ achieves significantly better performance than all unsupervised hashing methods CVH, IMH, and CorrAE, and generally outperforms the state of the art supervised hashing methods CMSSH, SCM, QCH in most cases. It is very worth noting that, CCQ is an \emph{unsupervised} hashing method that does not require labeled similarity information. Hence CCQ is particularly beneficial when labeled information is unavailable, which is the most common scenario in big data era. A notable limitation of orthogonal constrained methods CVH and IMH is that longer codes do not necessarily improve performance in cross-modal tasks ${I \rightarrow T}$ and ${T \rightarrow I}$. The reason is that these methods learn uncorrelated hash bits via eigenvalue decomposition on similarity matrix, which leads to unbalanced hash codes with the first $k$ eigenvectors (hash bits) dominating the whole hash codes. CCQ via composite quantization in isomorphic space can learn balanced binary codes, hence its performance improves with longer codes.

It is interesting to observe that the performances of cross-modal search task ${I \rightarrow T}$ is generally better than that of intra-modal search task ${I \rightarrow I}$, while this observation does not hold for the counterparts ${T \rightarrow I}$ and ${T \rightarrow T}$. This seems abnormal at first sight as cross-modal search tasks are often more challenging than intra-modal search tasks due to semantic gap \cite{cite:TPAMI14Wiki}. However, in general, text retrieval is much easier than image retrieval, making different modalities of the objects contribute differently the cross-modal retrieval performance. We believe that ${T \rightarrow T}$ is much easier than ${T \rightarrow I}$, but ${I \rightarrow T}$ may be easier than ${I \rightarrow I}$ because image-to-image retrieval is often the most difficult task. In the case of cross-modal task ${I \rightarrow T}$, the knowledge of text modality is transferred to image modality, making cross-modal retrieval easier. This shows cross-modal retrieval can be improved by knowledge transfer.

The precision-recall curves and the precision@top-$R$ curves \cite{cite:MM13LCMH,cite:VLDB14MSAE} are illustrated in Figure~\ref{fig:nuswide}. For space limitation, only the results of cross-modal tasks ${I \rightarrow T}$ and ${T \rightarrow I}$ are presented, while similar trends of results are observed on intra-modal tasks ${I \rightarrow I}$ and ${T \rightarrow T}$. CCQ shows the best cross-modal retrieval performance on all recall levels and top-$R$ ranks. This validates that CCQ is capable for diverse retrieval scenarios, which may emphasize higher precision at smaller number of top-$R$ retrieved results, i.e. Web search, or higher recall tolerating fairly lower precision, i.e. vertical search.

\begin{figure*}[!htb]
    \centering
    \subfigure[${I \rightarrow T}$ @ 16 bits]{
        \includegraphics[width=0.23\textwidth]{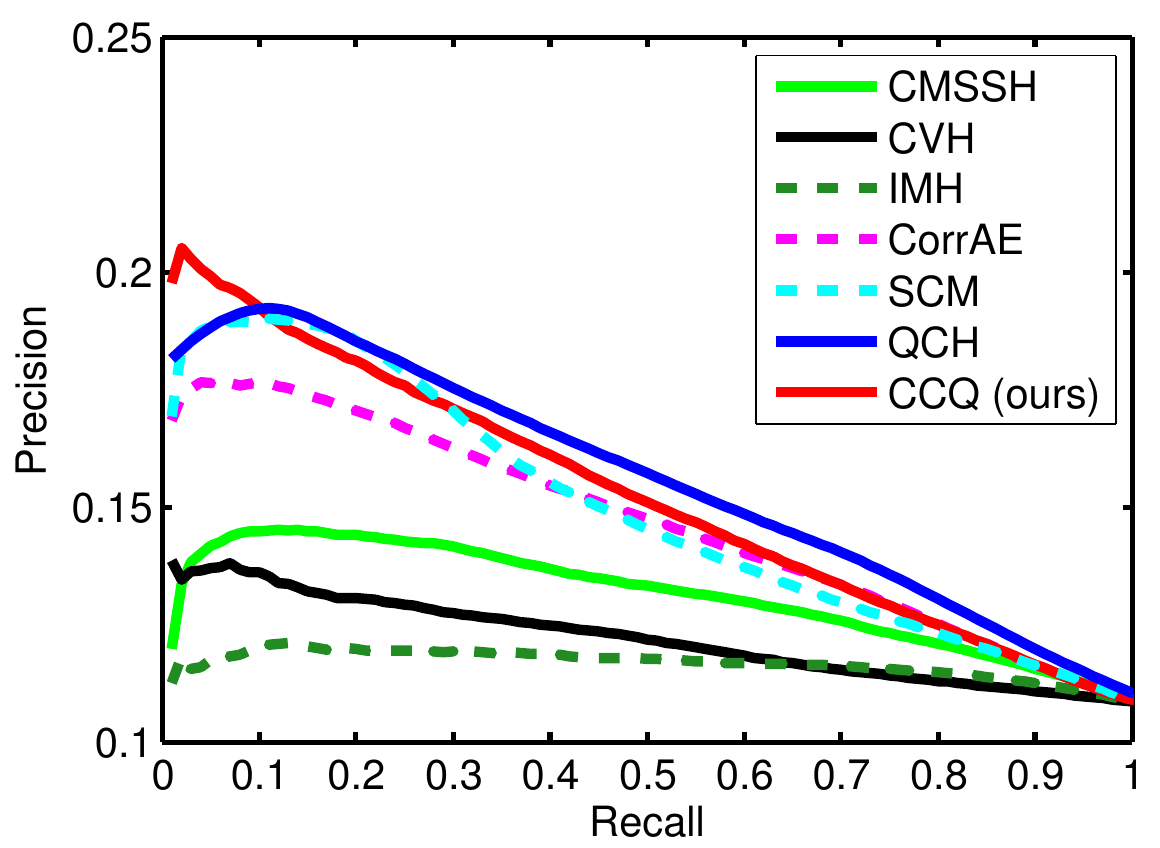}
        \label{fig:pr_wiki_1}
    }
    \subfigure[${I \rightarrow T}$ @ 32 bits]{
        \includegraphics[width=0.23\textwidth]{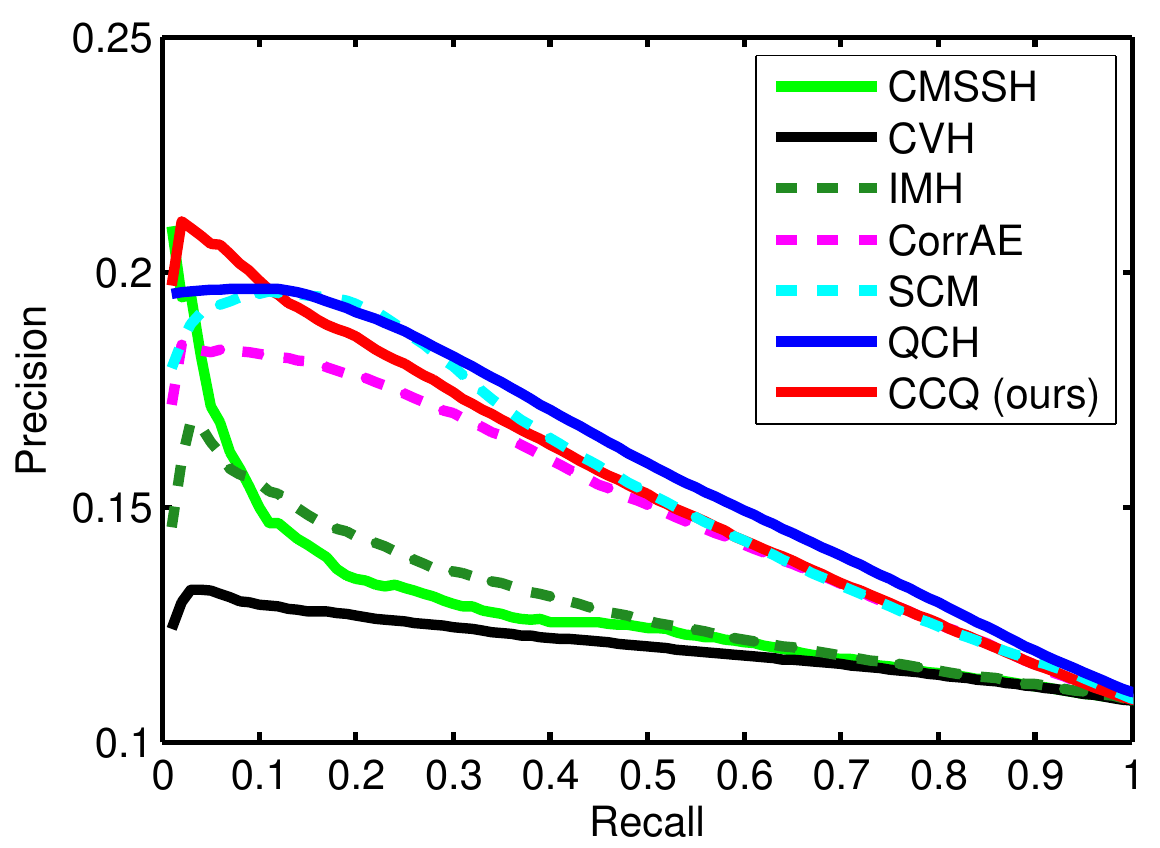}
        \label{fig:pr_wiki_2}
    }
    \subfigure[${T \rightarrow I}$ @ 16 bits]{
        \includegraphics[width=0.23\textwidth]{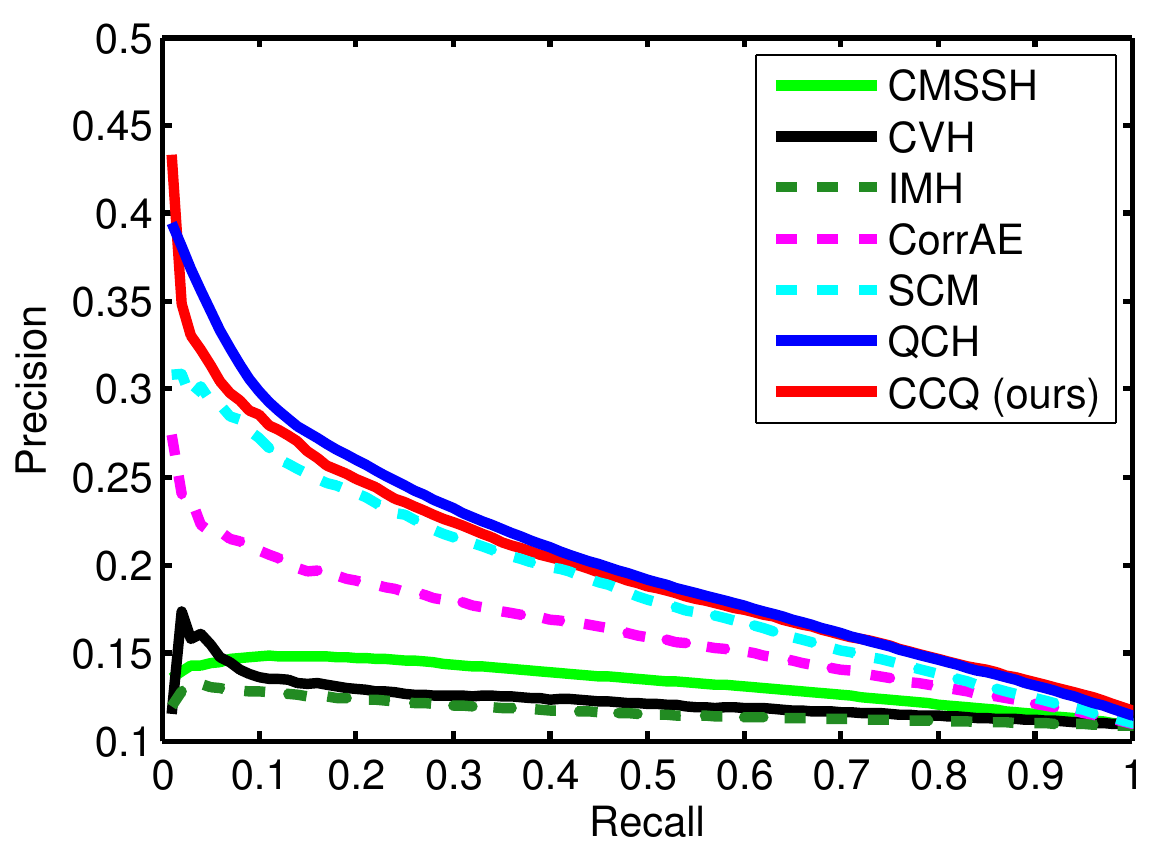}
        \label{fig:pr_wiki_3}
    }
    \subfigure[${T \rightarrow I}$ @ 32 bits]{
        \includegraphics[width=0.23\textwidth]{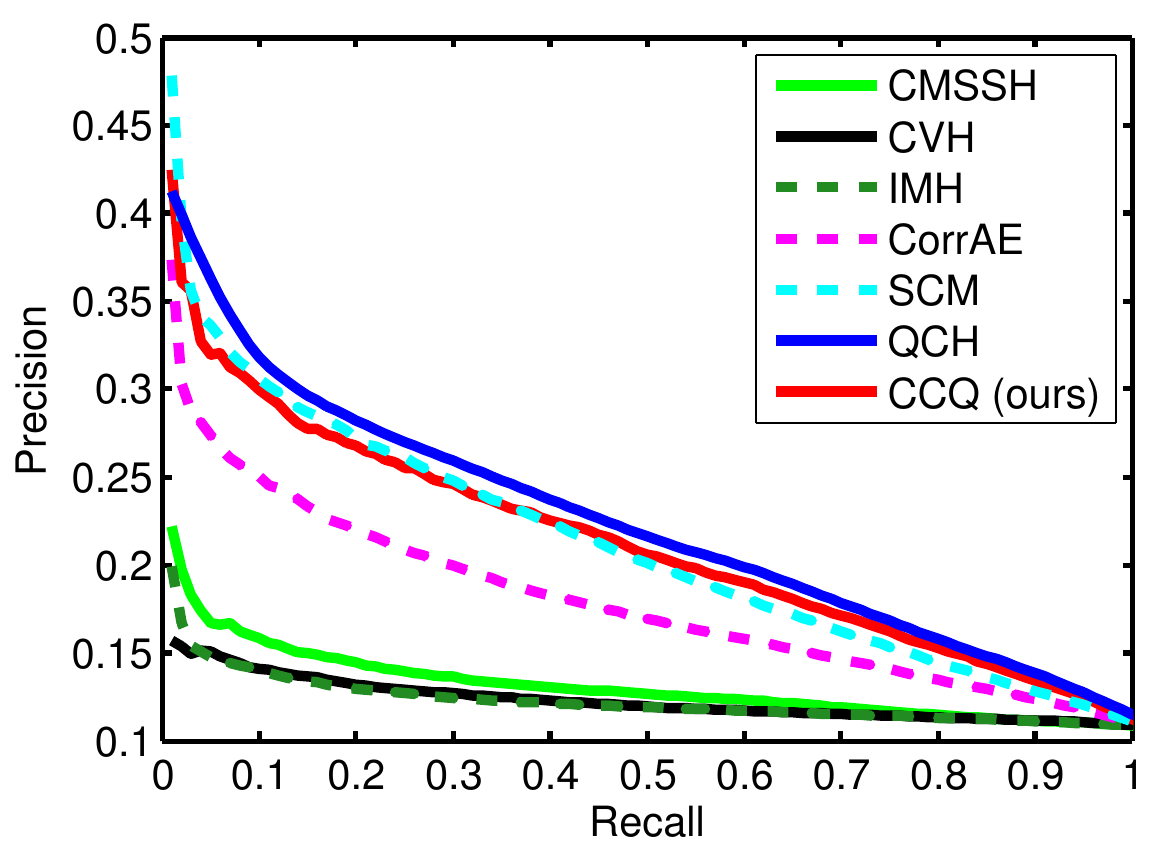}
        \label{fig:pr_wiki_4}
    }
    \\
    \vspace{-5pt}
    \subfigure[${I \rightarrow T}$ @ 16 bits]{
        \includegraphics[width=0.23\textwidth]{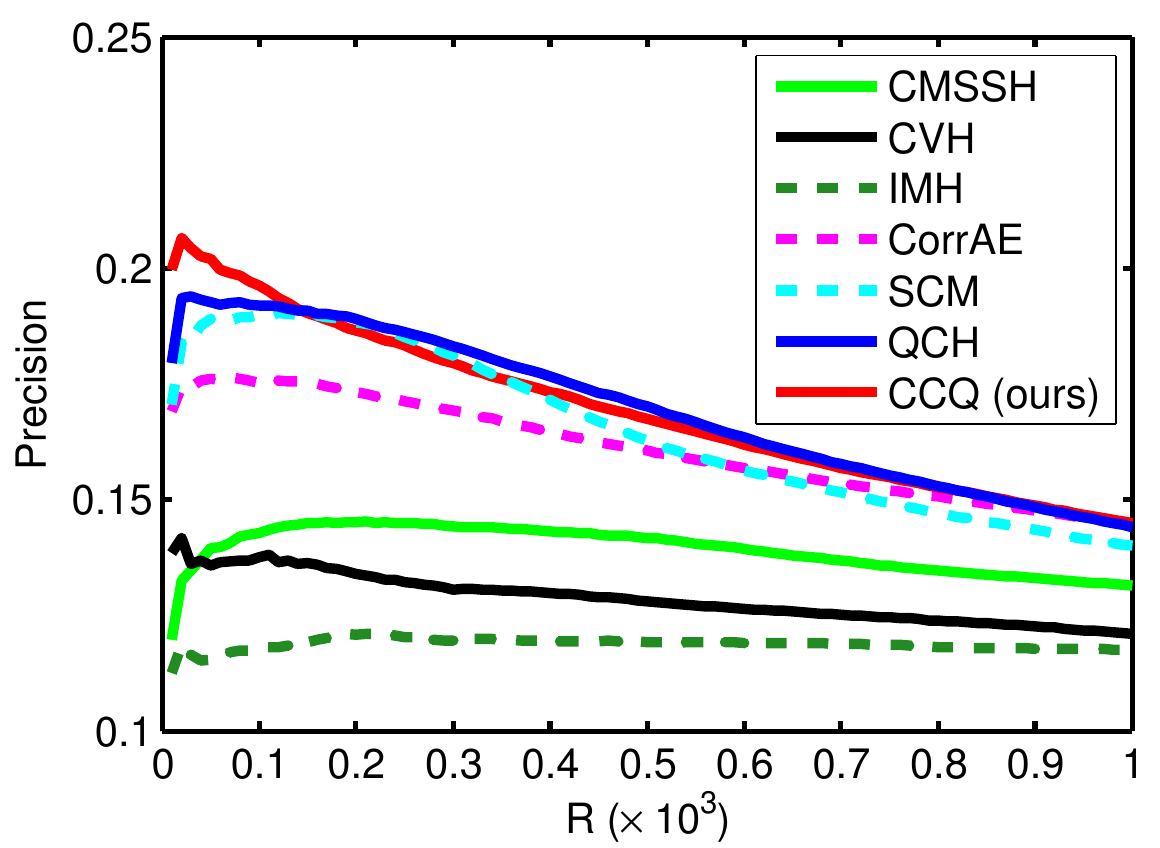}
        \label{fig:prec_wiki_1}
    }
    \subfigure[${I \rightarrow T}$ @ 32 bits]{
        \includegraphics[width=0.23\textwidth]{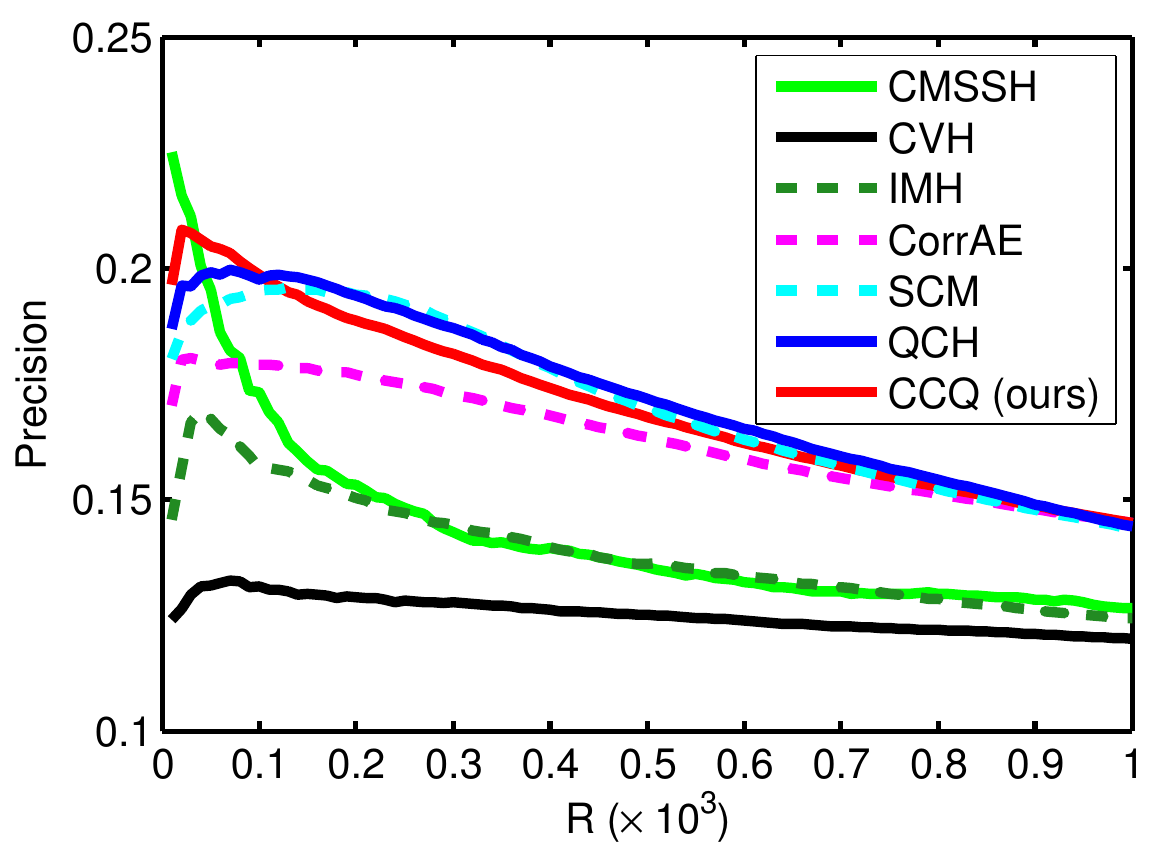}
        \label{fig:prec_wiki_2}
    }
    \subfigure[${T \rightarrow I}$ @ 16 bits]{
        \includegraphics[width=0.23\textwidth]{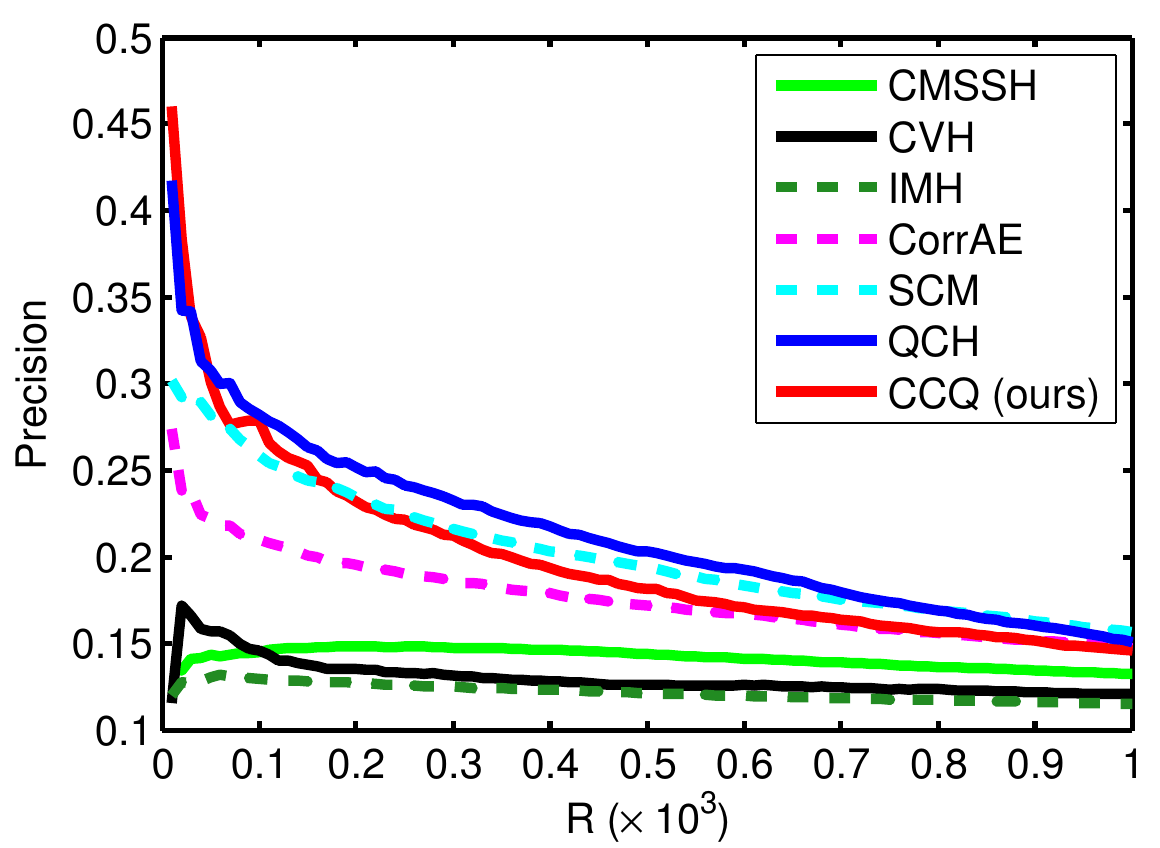}
        \label{fig:prec_wiki_3}
    }
    \subfigure[${T \rightarrow I}$ @ 32 bits]{
        \includegraphics[width=0.23\textwidth]{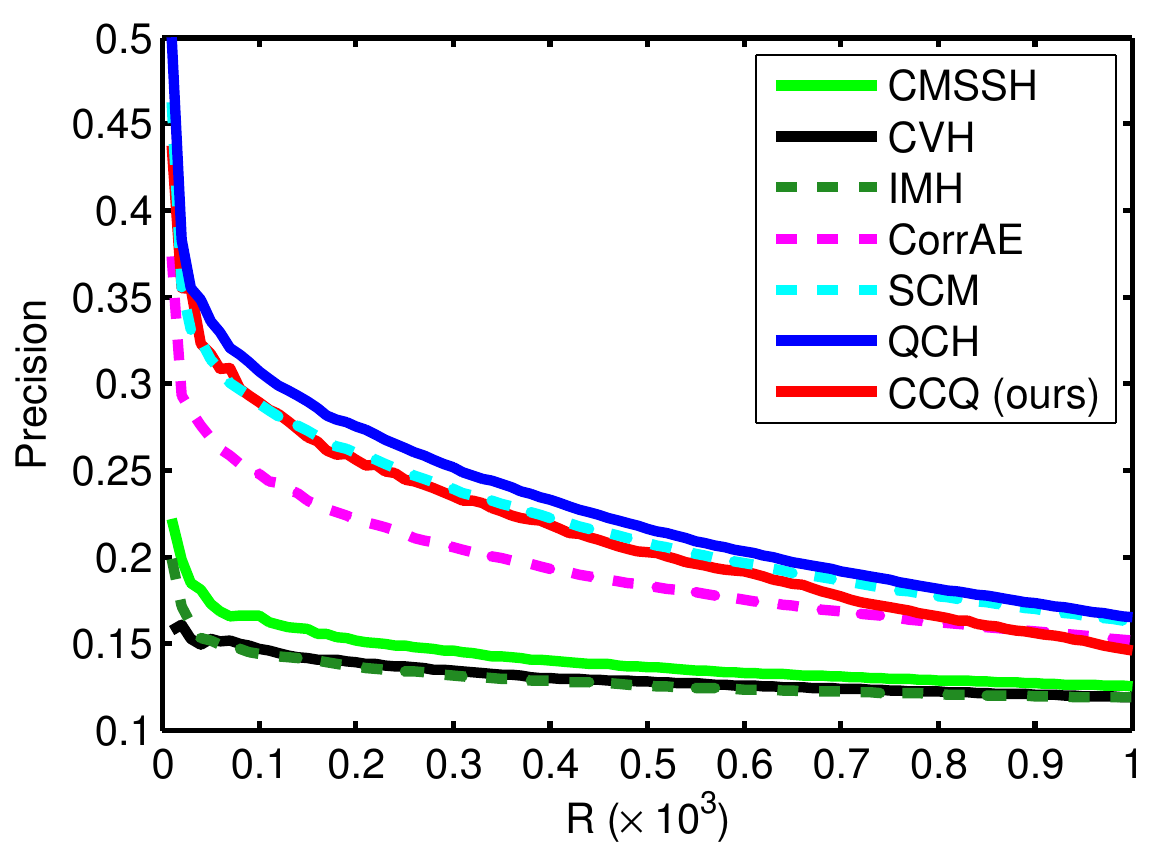}
        \label{fig:prec_wiki_4}
    }
    \vspace{-10pt}
    \caption{Precision-recall curves (top) and precision@R curves (bottom) on Wiki cross-modal search tasks @ 16 and 32 bits.}
    \label{fig:wiki}
\end{figure*}

\begin{figure*}[!htb]
    \centering
    \subfigure[${I \rightarrow T}$ @ 16 bits]{
        \includegraphics[width=0.23\textwidth]{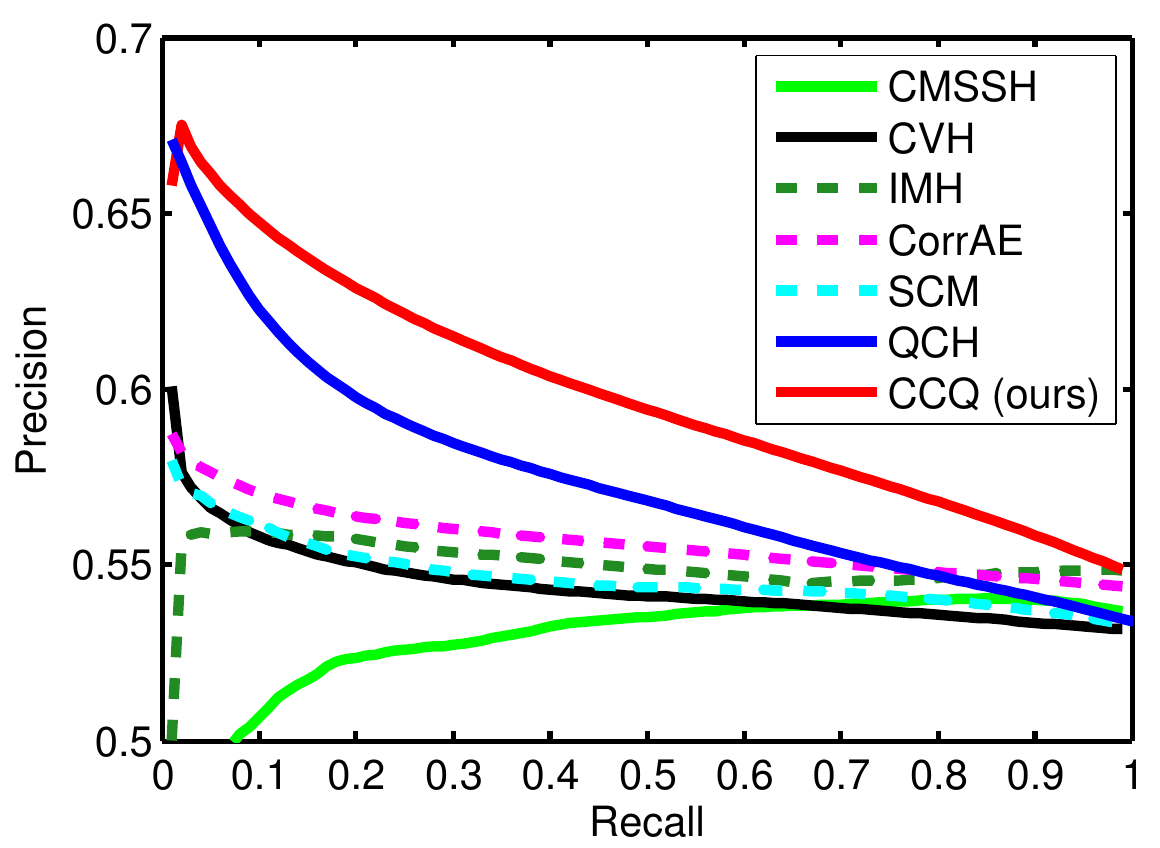}
        \label{fig:pr_flickr1m_1}
    }
    \subfigure[${I \rightarrow T}$ @ 32 bits]{
        \includegraphics[width=0.23\textwidth]{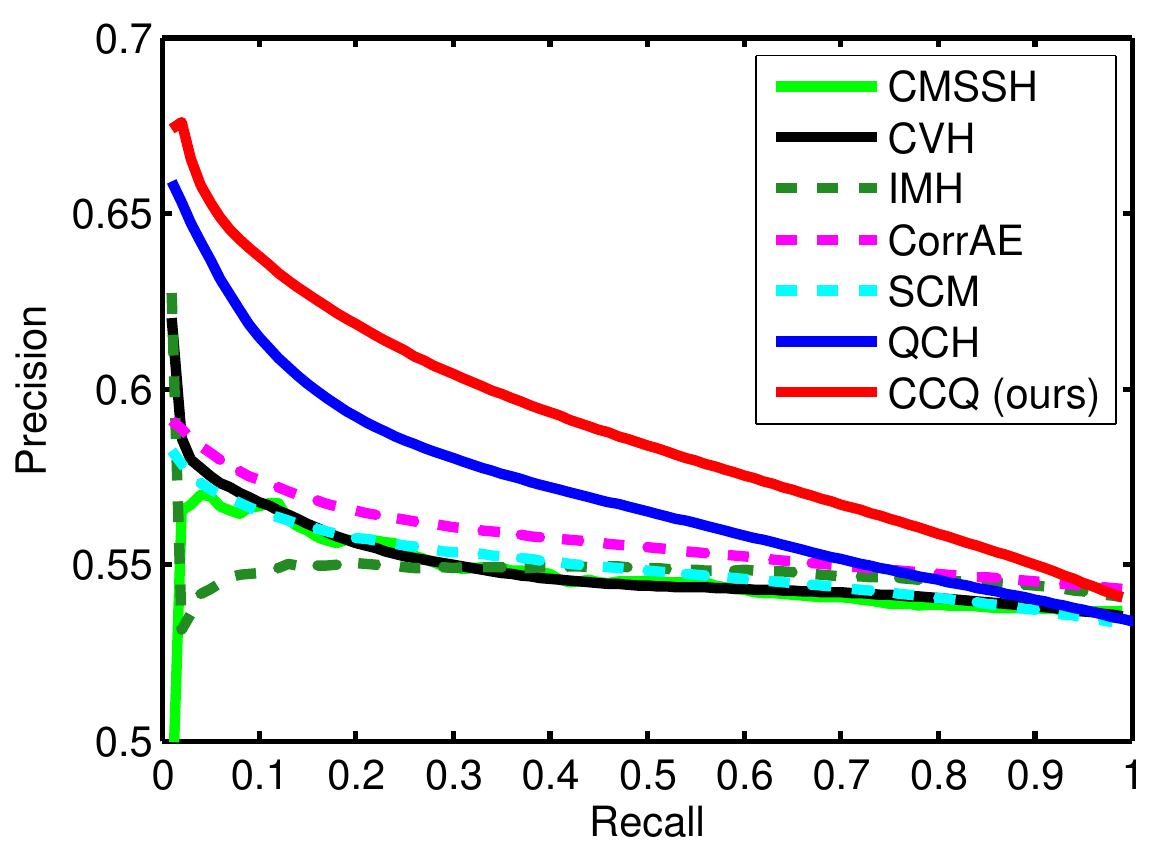}
        \label{fig:pr_flickr1m_2}
    }
    \subfigure[${T \rightarrow I}$ @ 16 bits]{
        \includegraphics[width=0.23\textwidth]{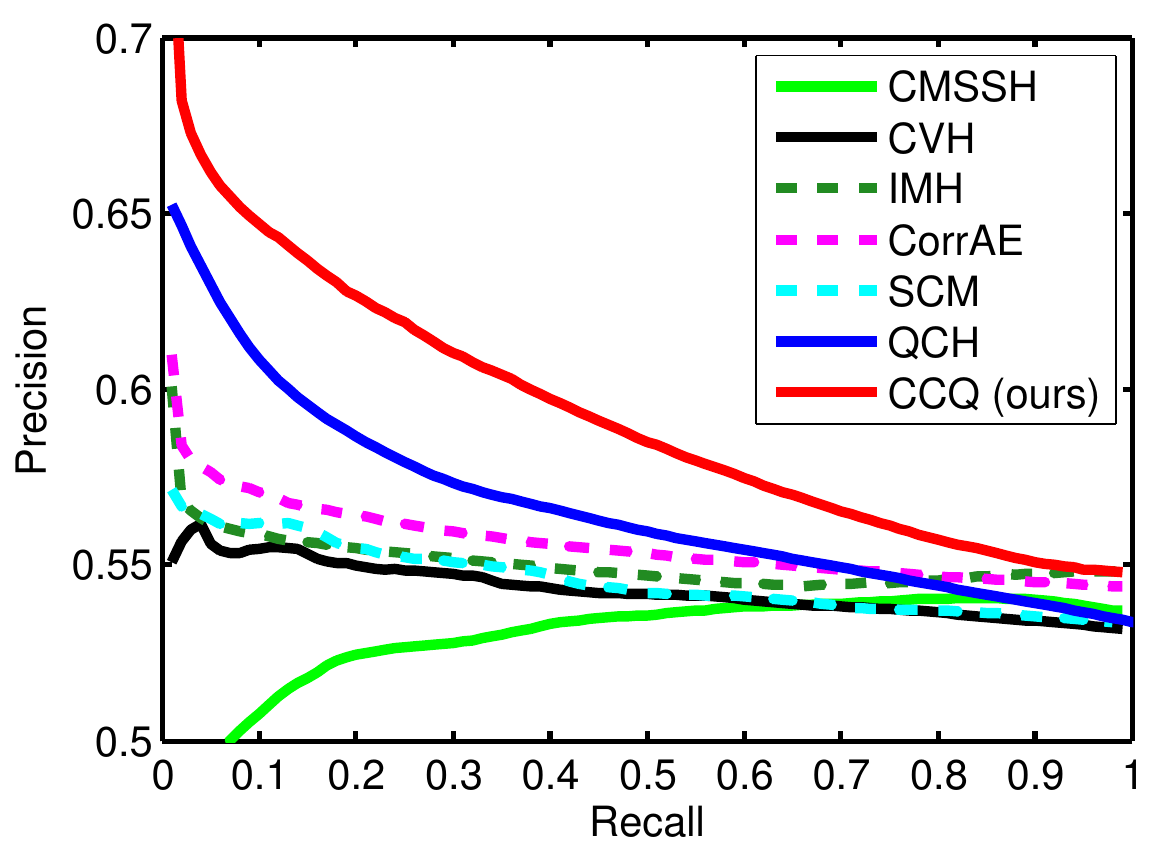}
        \label{fig:pr_flickr1m_3}
    }
    \subfigure[${T \rightarrow I}$ @ 32 bits]{
        \includegraphics[width=0.23\textwidth]{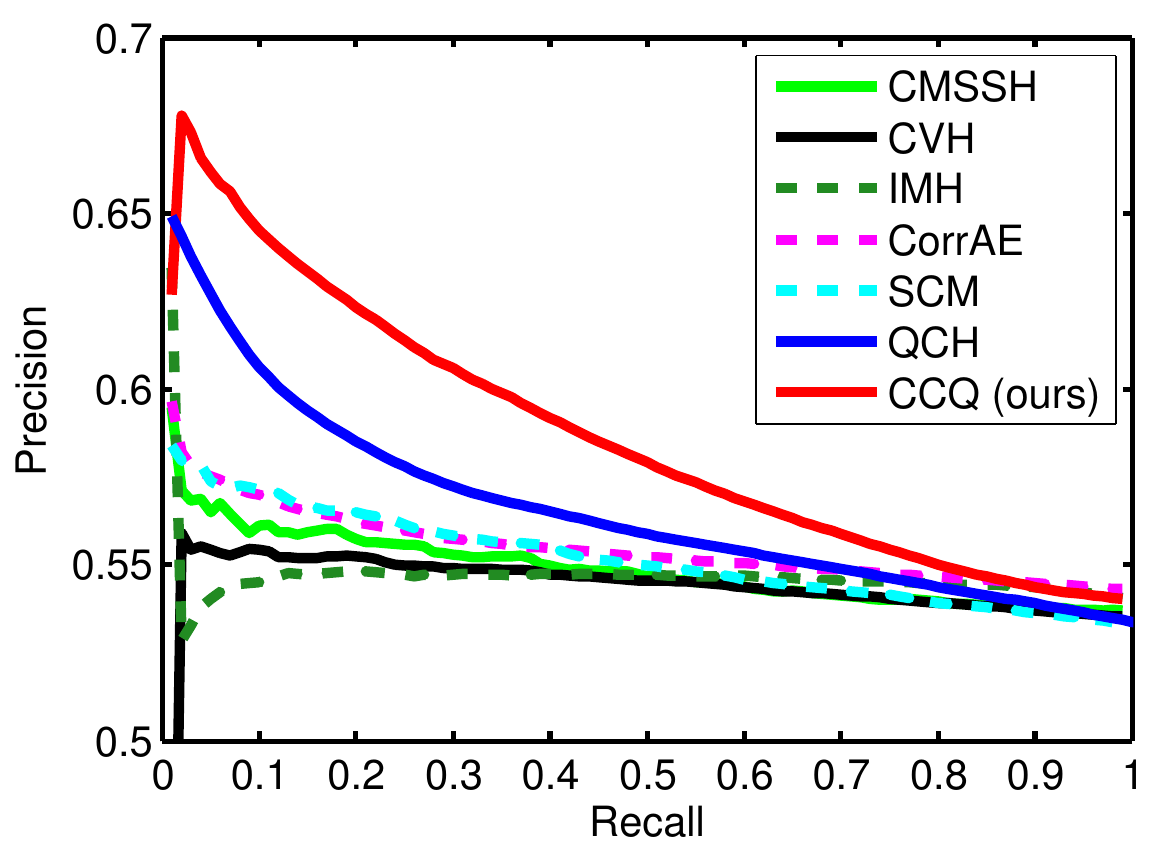}
        \label{fig:pr_flickr1m_4}
    }
    \\
    \vspace{-5pt}
    \subfigure[${I \rightarrow T}$ @ 16 bits]{
        \includegraphics[width=0.23\textwidth]{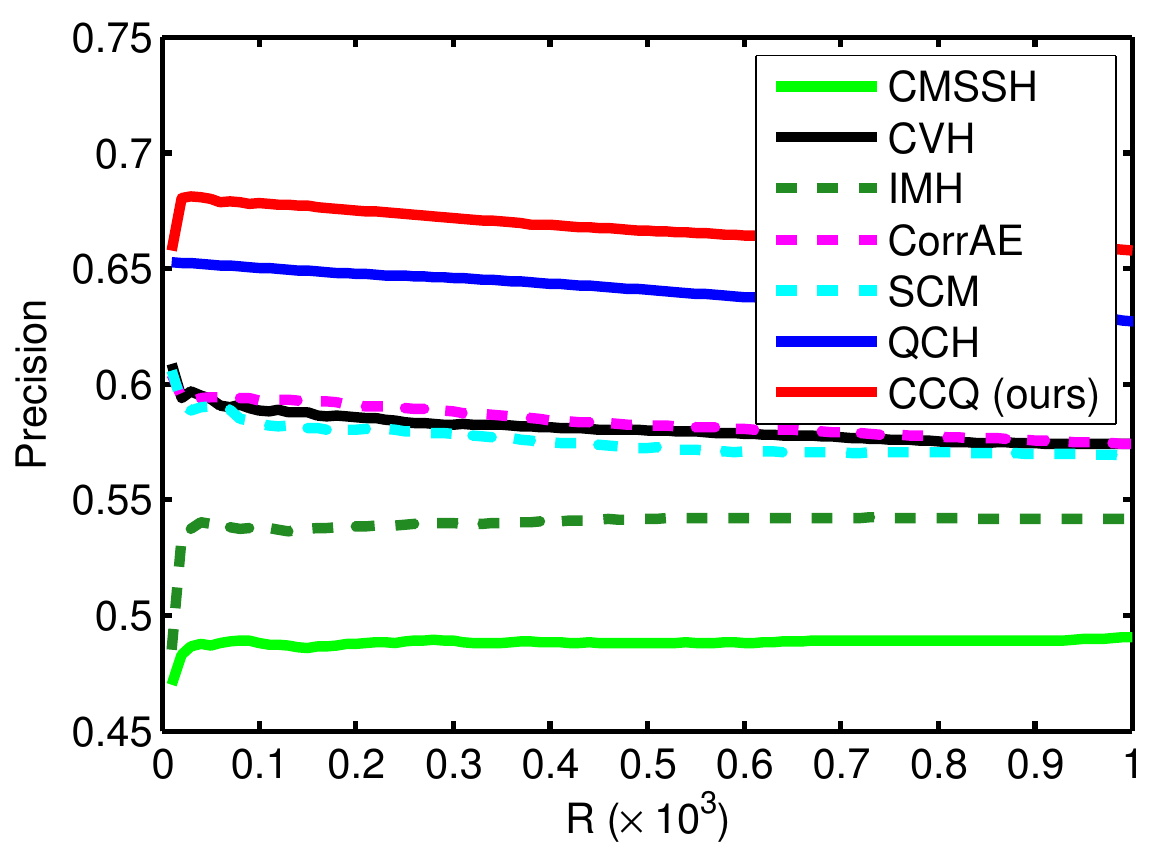}
        \label{fig:prec_flickr1m_1}
    }
    \subfigure[${I \rightarrow T}$ @ 32 bits]{
        \includegraphics[width=0.23\textwidth]{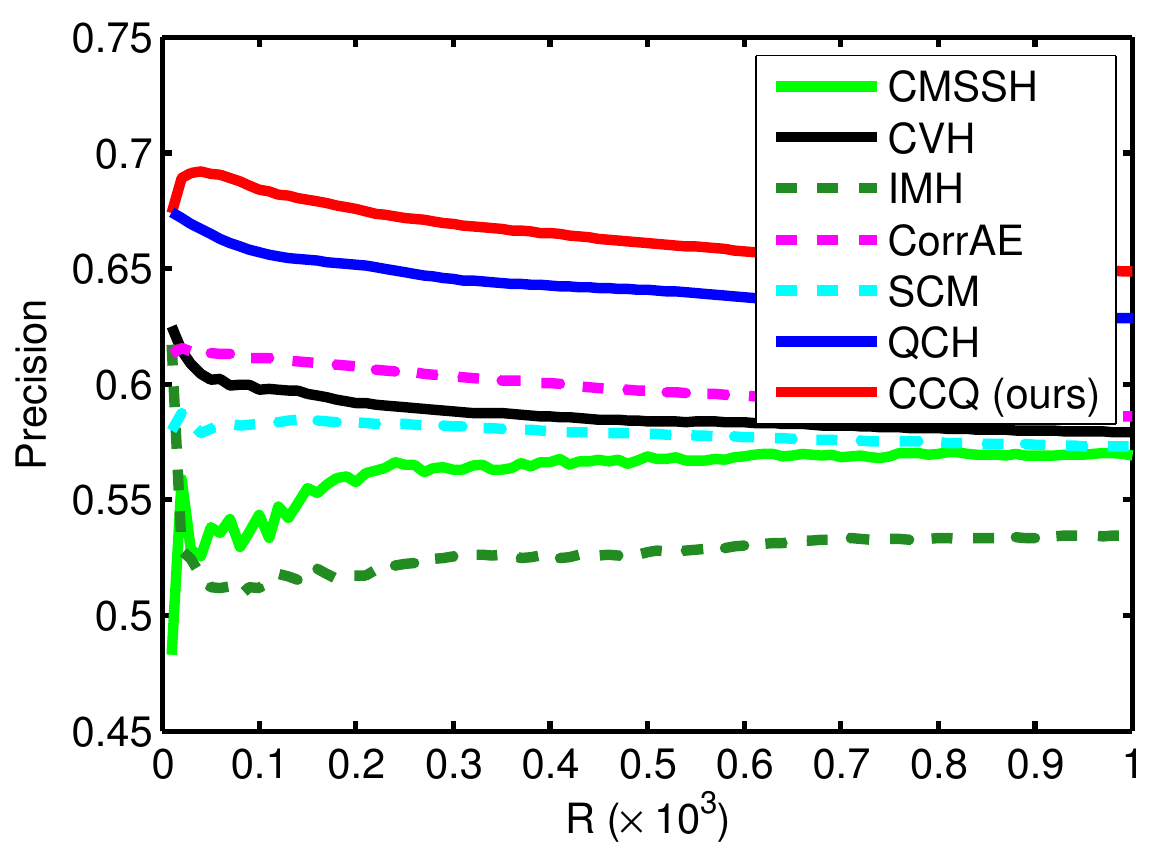}
        \label{fig:prec_flickr1m_2}
    }
    \subfigure[${T \rightarrow I}$ @ 16 bits]{
        \includegraphics[width=0.23\textwidth]{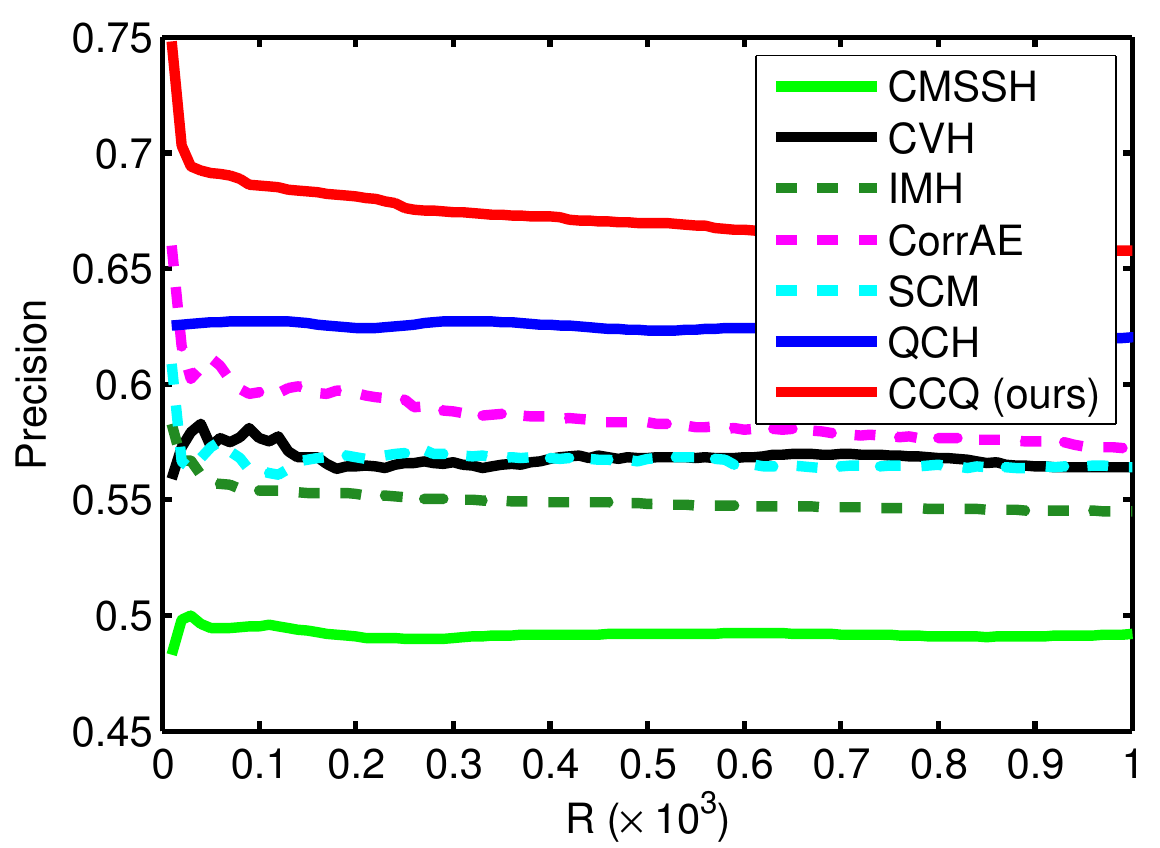}
        \label{fig:prec_flickr1m_3}
    }
    \subfigure[${T \rightarrow I}$ @ 32 bits]{
        \includegraphics[width=0.23\textwidth]{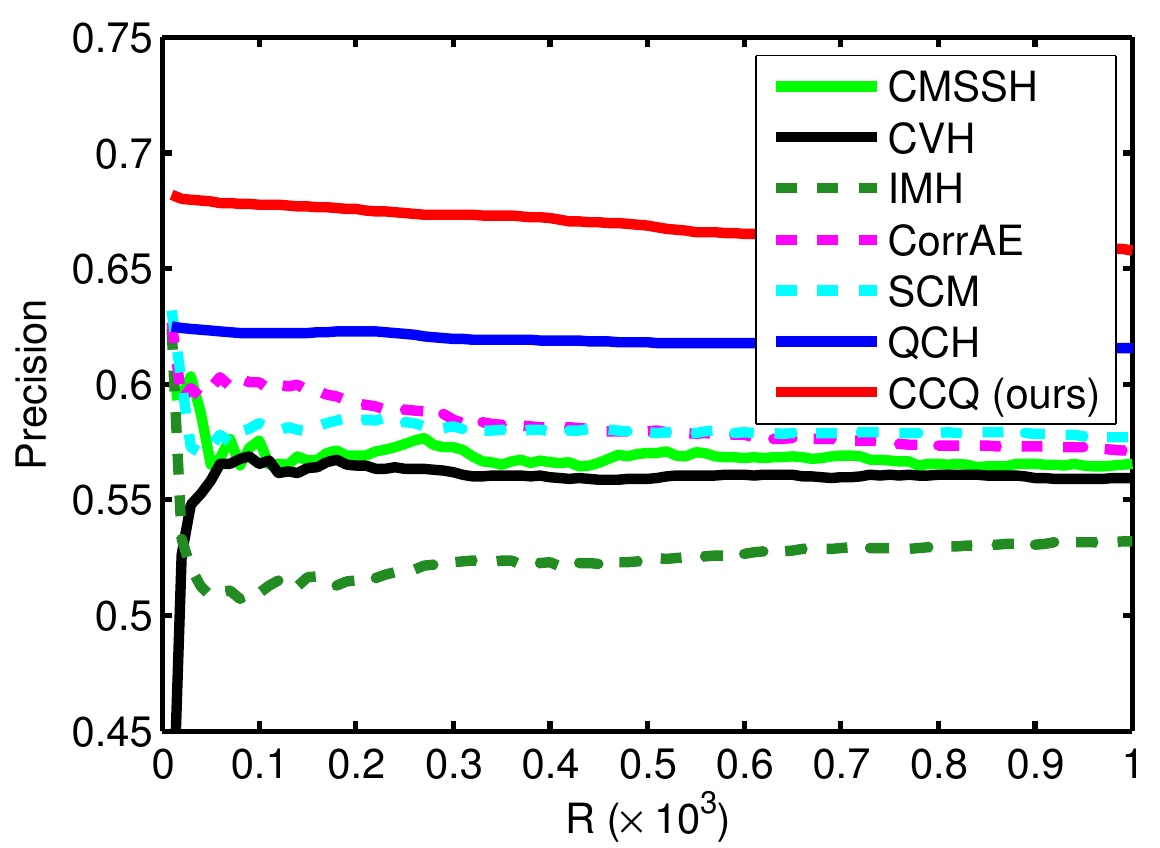}
        \label{fig:prec_flickr1m_4}
    }
    \vspace{-10pt}
    \caption{Precision-recall curves (top) and precision@R curves (bottom) on Flickr1M cross-modal search tasks @ 16 and 32 bits.}
    \label{fig:flickr1m}
\end{figure*}

\subsubsection{Results on Wiki}
Table~\ref{table:MAP} also compares the search performance of CCQ and the state of the art methods on the Wiki dataset, which shows that CCQ significantly outperforms the unsupervised hashing methods CVH, IMH, and CorrAE, and performs comparably to supervised hashing methods SCM and QCH. A notable observation is that the MAPs are much smaller than those on the NUS-WIDE dataset. This is reasonable as the images of Wiki are of low-quality (low-resolution) and high-diversity, i.e. the text can well describe the semantics of the image-text pair while the image may not be well related to the semantics of the image-text pair, which makes it more challenging to capture the semantic correlations between image query and text database. Note that the texts of Wiki are featured articles which are well edited by experts and rich in semantic information, hence it is fairly easy to correlate a text query with the multimodal database.

The precision-recall curves and the precision@top-$R$ curves \cite{cite:MM13LCMH,cite:VLDB14MSAE} are demonstrated in Figure~\ref{fig:wiki}. Again, CCQ is among the top-performing methods on all recall levels and all top-$R$ ranks. A noticeable performance drop can be examined from the precision-recall curves to the precision@top-$R$ curves. And this is because the Wiki dataset is very small-scale with only 2,173 database items, hence all relevant results will be retrieved at small $R$ and no more relevant results can be further retrieved when $R$ grows too large. This highlights the importance of evaluation with different metrics.

A crucial superiority of CCQ over the comparison methods lies in that CCQ jointly learns the isomorphic latent space and compact binary codes by minimizing both correlation and quantization errors in a unified optimization framework, while comparison methods merely learn the isomorphic space and binary codes in a separated two-step pipeline. As examined by CorrAE \cite{cite:MM14CorrAE}, the quality of searching with binary codes using Hamming distance is evidently inferior to searching with continuous features using Euclidean distance, due to substantial information loss by converting continuous features to binary codes without minimizing the quantization error. The search quality loss due to binarization is shown in Figure~\ref{fig:qerror}, and for CCQ, we use ${\bf R}^{v{\mathsf{T}}} {{\bf x}^v_n}$ for continuous features and ${\bf C}{\bf b}^v_n$ for binary codes. We see that IMH and CorrAE suffer from substantial MAP loss (similar trends are observed from other methods) while CCQ is almost lossless to binarization. In other words, by jointly minimizing the correlation error and quantization error, CCQ can circumvent information loss and learn more accurate binary codes.

\subsubsection{Results on Flickr1M}
In practical retrieval systems, it is crucial to process large-scale datasets in both training and testing phases, and thus we compare CCQ with state of the art methods on large-scale Flickr1M dataset. We report the MAP results in Table~\ref{table:MAP} and illustrate the detailed precision-recall curves and precision@top-$R$ curves in Figure~\ref{fig:flickr1m}. As mentioned before, we randomly select 10,000 image-text pairs as training set to learn hash functions if it is computationally too demanding to train these methods on the complete Flickr1M dataset. We can observe that CCQ significantly outperforms the comparison methods on all retrieval tasks and performs better with longer codes. This validates the superiority of CCQ in processing large-scale datasets, as the experimental setting on Flickr1M is consistent with real-word system setting where a sufficiently accurate model needs to be derived on a sufficiently large training set. We will examine CCQ's ability to process real semi-paired data in the sequel.

\begin{figure*}[!htb]
    \centering
    \subfigure[MAP Loss]{
        \includegraphics[width=0.23\textwidth]{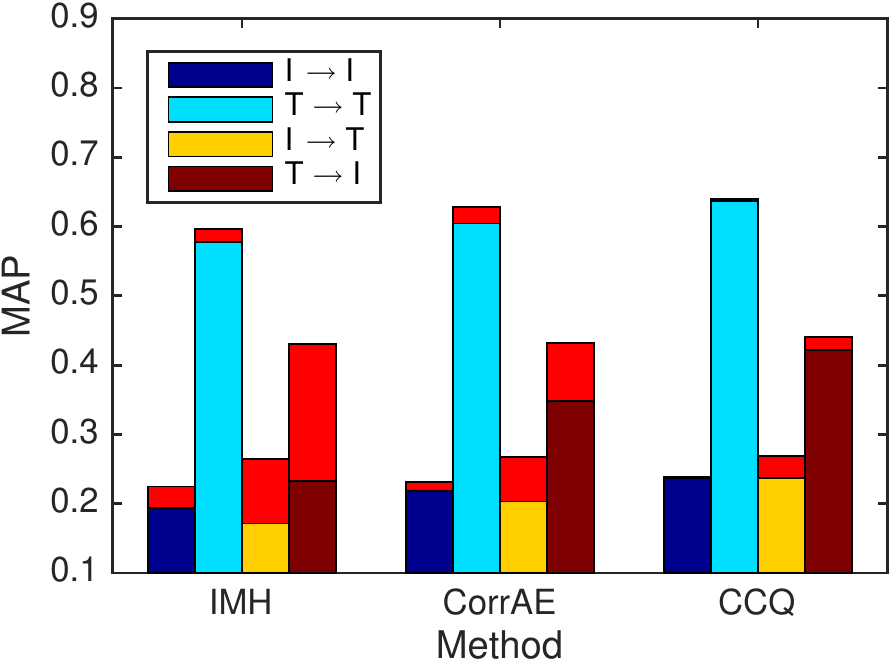}
        \label{fig:qerror}
    }
    \subfigure[NUS-WIDE]{
        \includegraphics[width=0.23\textwidth]{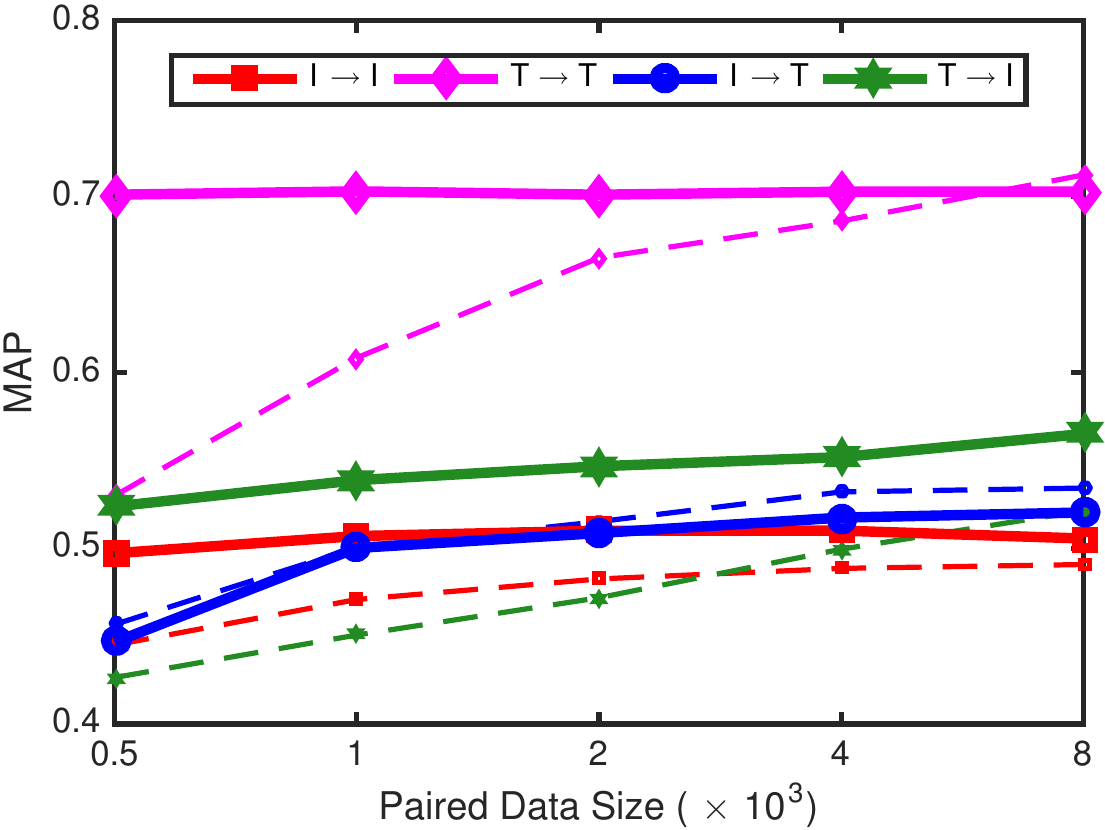}
        \label{fig:semi_pair_nuswide}
    } 
    \subfigure[Flickr1M]{
        \includegraphics[width=0.23\textwidth]{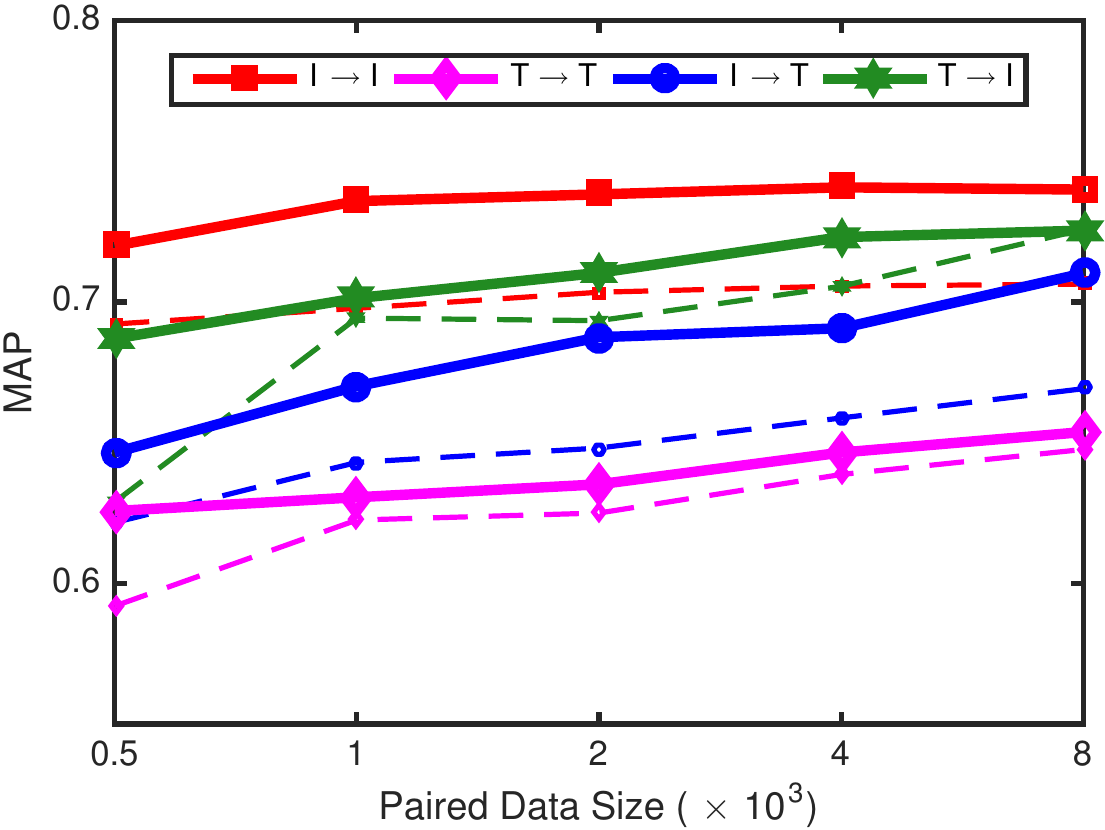}
        \label{fig:semi_pair_flickr1m}
    }
    \subfigure[Search Efficiency]{
        \includegraphics[width=0.23\textwidth]{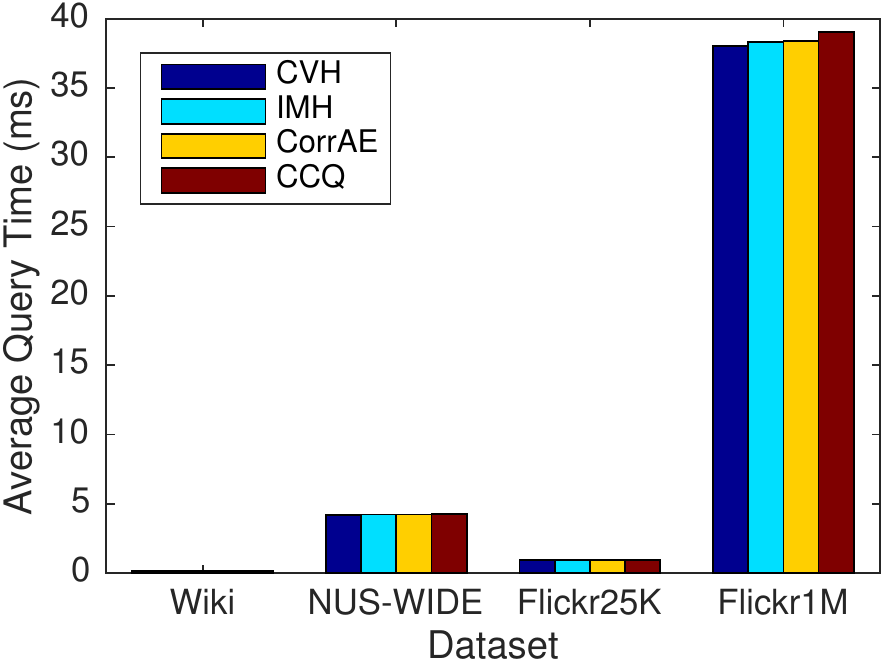}
        \label{fig:search}
    }
    \vspace{-10pt}
    \caption{Effectiveness and efficiency experiments: (1) Loss of search quality in MAP (by red bars) due to conversion from continuous features to binary codes on Wiki. (b)--(c) the MAP of CCQ w.r.t. different numbers of paired data points (the number of unpaired data points is fixed to $10,000$). Solid lines indicate training with both paired and unpaired data, and dashed lines indicate training with only paired data. (d) Average search time (ms) for each query via lookup tables on Wiki, NUS-WIDE, Flickr25K, and Flickr1M.}
\end{figure*}

\begin{figure*}[!htb]
    \centering
    \subfigure[Time]{
        \includegraphics[width=0.23\textwidth]{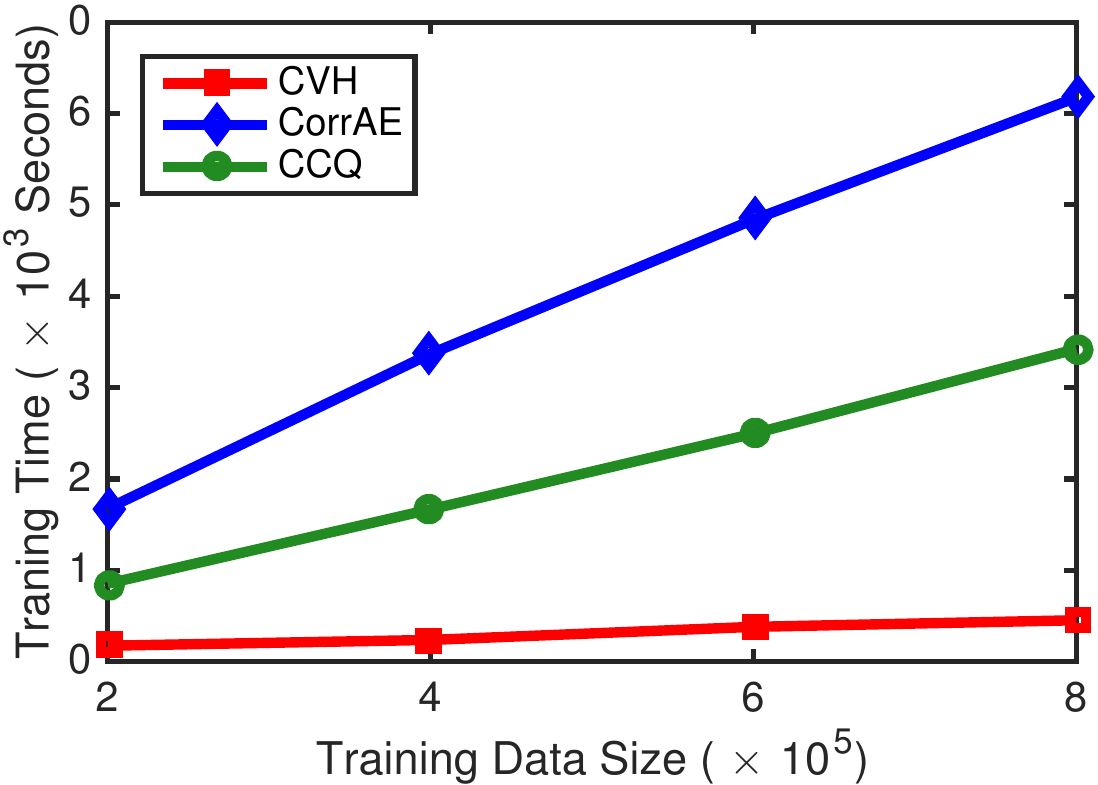}
        \label{fig:time}
    }
    \subfigure[Memory]{
        \includegraphics[width=0.23\textwidth]{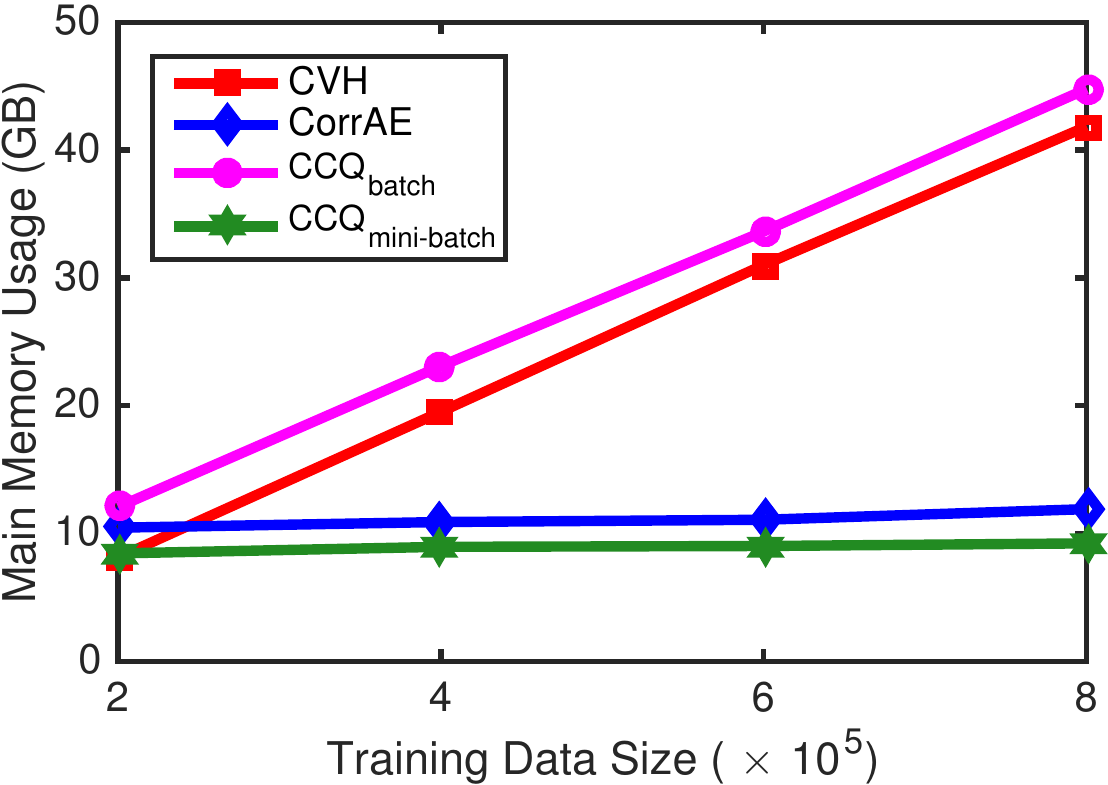}
        \label{fig:memory}
    }
    \subfigure[${I \rightarrow T}$ @ 32 bits]{
        \includegraphics[width=0.23\textwidth]{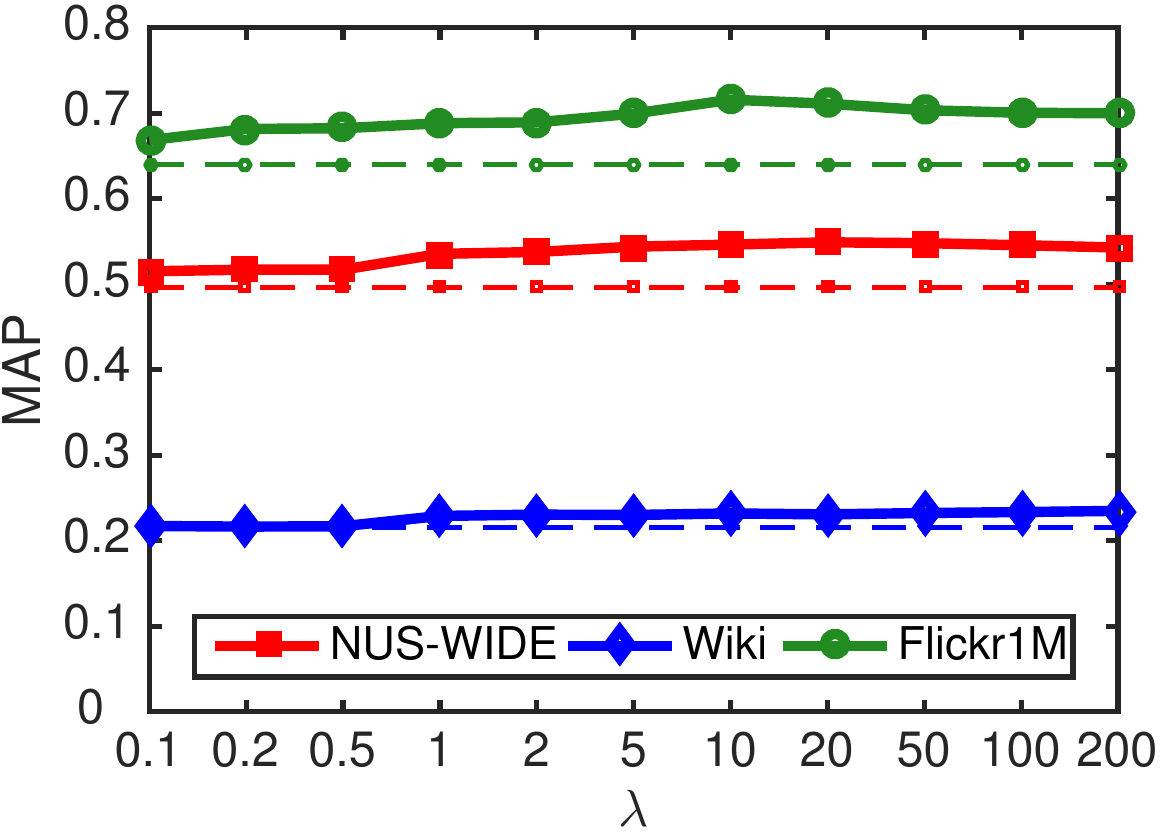}
        \label{fig:lambda_it}
    }
    \subfigure[${T \rightarrow I}$ @ 32 bits]{
        \includegraphics[width=0.23\textwidth]{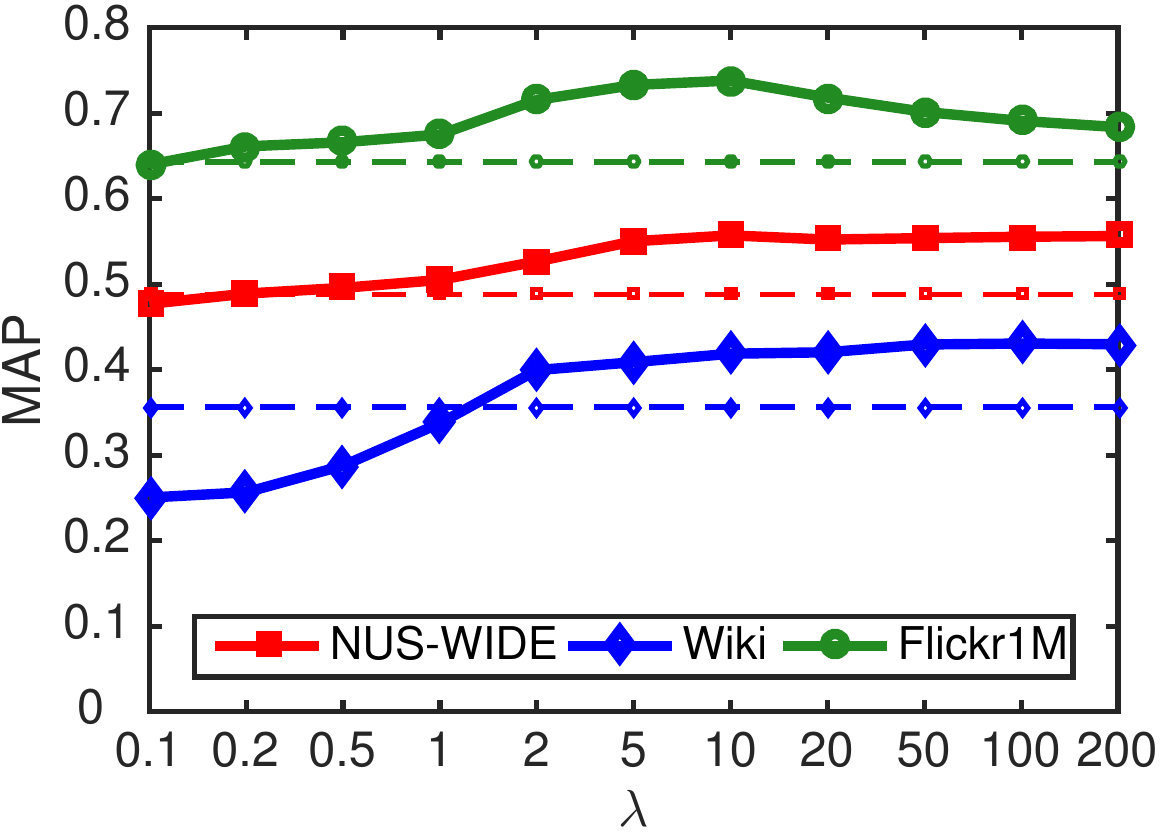}
        \label{fig:lambda_ti}
    }
    \vspace{-10pt}
    \caption{Efficiency verification experiments: (a)--(b) Training time and memory costs of different methods on the complete Flickr1M dataset. CCQ with batch (mini-batch) training scales linearly (constantly) to the sample size. (c)--(d) The MAP of CCQ @ $32$ bits versus parameter $\lambda \in [0.1,200]$ for cross-modal retrieval tasks $I \rightarrow T$ and $T \rightarrow I$ on the NUS-WIDE, Wiki, and Flickr1M datasets.}
    \label{fig:sensitivity}
\end{figure*}

\subsection{Semi-Paired Data Quantization}
Most of the existing methods, including the ones in comparison, require full correspondences between different modalities. In other words, the multimodal data objects are fully paired, e.g. image-text pairs. As a result, these methods are unable to tackle more realistic scenarios in which only a limited number of paired data points are available. CCQ explores the idea of  semi-supervised learning and can leverage both paired data (all modalities of the objects are available) and unpaired data (partial modalities of the objects are available) to boost the search quality when paired data are limited. To verify this, we consider the NUS-WIDE and Flickr1M datasets and for each dataset, we randomly sample as the training set 1) 10,000 images without text modality, 2) 10,000 texts without image modality, and 3) different numbers, i.e. $[0.5, 1, 2, 4, 8] \times 10^3$, of image-text pairs. We train CCQ with these semi-paired data and evaluate the search performance in terms of MAP @ 32 bits.

The search performances of CCQ on NUS-WIDE and Flickr1M are demonstrated in Figures~\ref{fig:semi_pair_nuswide} and \ref{fig:semi_pair_flickr1m} respectively, where solid lines indicate training with both paired and unpaired data, and dashed lines indicate training with only paired data. We can observe that when the number of paired data points is small, CCQ trained with both paired and unpaired data significantly outperforms CCQ trained with only paired data on most of the multimodal search tasks; when the number of paired data points increases, the search performance of CCQ will gradually saturate while the search quality of the two training paradigms will finally match. This clearly shows that CCQ can effectively leverage both paired and unpaired data (partial multimodal data) to boost search quality in a semi-paired data scenario. 

An unexpected phenomenon is that semi-paired training slightly deteriorates search performance on task ${I \rightarrow T}$. We conjecture the plausible reason is that searching text database with image queries significantly relies on maximizing the image-text correlations to bridge the semantic gap between low-level image features and high-level image semantics, i.e. its associated texts. When the number of paired data points is obviously smaller than the number of unpaired data points, semi-paired training may tend to weaken correlation learning from image-text pairs and incur performance degradation.

\subsection{Search Efficiency}
To search for approximate nearest neighbors (ANN) in database for a given query, all methods in comparison perform linear scan using symmetric or asymmetric distance. Specifically, to compare a query vector with a database vector, CVH, IMH, and CorrAE all compute symmetric Hamming distance via lookup tables, and CCQ constructs a distance lookup table for each query that stores the Euclidean distances between the query and the multiple codebooks. As a result, CVH, IMH, CorrAE, and CCQ compute exactly the same number of table lookups for linear scan, while their costs of computing the query-codebook distance lookup tables are slightly different, which can be negligible as they are infinitesimal w.r.t. the cost of linear scan. For example, the cost of computing the distance lookup table for CCQ takes only less than 1\% of the cost for linear scan on Flickr1M. The average search time of each query by CVH, IMH, CorrAE, and CCQ on the Wiki, NUS-WIDE, Flickr25K, and Flickr1M datasets is illustrated in Figure~\ref{fig:search}, from which we can observe that the search efficiency are comparable for all methods. While it is beyond the scope of this paper, we want to note that one can adopt a Multi-Index \cite{cite:CVPR12IMI} approach to achieve sub-linear search complexity on the binary codes and further boost search efficiency.

\subsection{Training Complexity}\label{section:scalability}
The training time and memory costs of CCQ scale linearly with the training sample size and hence can process large-scale dataset. To verify this, we follow \cite{cite:VLDB14MSAE} and use the complete Flickr1M dataset to evaluate the consumptions of training time and memory. CMSSH and IMH are not compared in this study since they require $O(N^2)$ complexity and run out of either time or memory on this dataset.

The comparison of training time costs is illustrated in Figure~\ref{fig:time}. We can observe that the training time of CCQ increases linearly with respect to the sample size. Due to multiple iterations between three sets of variables, i.e. transformation matrices ${\bf R}^v$, quantizer codebook $\bf{C}$, and modal-specific binary codes ${\bf B}^v$, CCQ is not as efficient as CVH. However, CCQ performs much more efficiently in time than CorrAE, which is a deep learning based method solving a time-demanding non-convex nonlinear optimization problem.

The training memory consumptions are compared in Figure~\ref{fig:memory}. Both batch and mini-batch (large-scale) implementations of CCQ store the model parameters in memory, which are independent of training dataset size. For the batch implementation, all training data is loaded in memory, while for the mini-batch implementation, the training data is partitioned into multiple mini-batches while only one mini-batch is loaded in memory each time. Hence in the mini-batch (large-scale) implementation, the memory cost stays constant when training dataset size increases. We can flexibly allocate memory to each mini-batch to trade off memory and disk reading costs.

\subsection{Parameter Sensitivity}\label{section:Sensitivity}
Towards unsupervised multimodal retrieval, CCQ is designed to involve only two  parameters, dimension of modality-isomorphic subspace $D$ and modality trade-off weight $\lambda$, and the performance is expected to be stable against parameter variations. Since we have fixed $D=\min(\{P_v\}_{v=1}^V,H)$, we only inspect the sensitivity of $\lambda$.

We compute MAP @ 32 bits on both cross-modal retrieval tasks by varying $\lambda$ between 0.1 and 200. The performance of CCQ w.r.t. parameter $\lambda$ is shown in Figure~\ref{fig:lambda_it} and \ref{fig:lambda_ti}. We see that CCQ can consistently outperform all the unsupervised baseline methods by a large margin with $\lambda$ varying between $1$ and $200$. This validates that CCQ is robust against parameter selection and is applicable to unsupervised multimodal retrieval with easily-configured parameters.

\section{Conclusion and Future Work}\label{section:Conclusion}
In this paper, we have formally approached seamless multimodal hashing through a novel composite correlation quantization (CCQ). It integrates multimodal correlation and composite quantization into a seamless latent semantic analysis (LSA) framework, which yields compact binary codes that encode both intra-modal similarity and inter-modal correlation. The sharing of codebooks and binary codes across modalities enables joint learning of latent semantics that are maximally correlated in the isomorphic feature space, which serves as the key contributor to the efficacy of the proposed CCQ method.

In the future, we plan to equip our model with a deep learning architecture which can learn highly abstract nonlinear representations to better distill the correlation structures across multiple modalities.

\section{Acknowledgments}
The authors would like to thank Dr Jingdong Wang for insightful comments. This work was supported by National Natural Science Foundation of China (61325008, 61502265), China Postdoctoral Science Foundation (2015T80088), National Science\&Technology Supporting Program (2015BAH14F02), NSF grant III-1526499, and Tsinghua TNList Lab Fund for Big Data Science and Technology.

\bibliographystyle{abbrv}
\bibliography{CCQ}

\end{document}